\title{Differentially Private Condorcet Voting}
\author{}
\newtheorem{theorem}{Theorem}
\newtheorem{definition}{Definition}
\newtheorem{proposition}{Proposition}
\newtheorem{lemma}{Lemma}
\theoremstyle{definition}
\newtheorem*{remark}{Remark}
\begin{document}


\def\p{\mathbb{P}}
\def\lap{\operatorname{Lap}}
\def\sgn{\operatorname{Sgn}}
\def\LA{\mathcal{L}(A)}
\def\erm{ {\rm e} }
\def\drm{ {\rm d} }
\def\CW{\operatorname{CW}}
\def\PDF{f_\lambda}
\def\CDF{F_\lambda}
\def\CMLAP{\operatorname{CM}^\text{\rm LAP}}
\def\CMEXP{\operatorname{CM}^\text{\rm EXP}}
\def\CMRR{\operatorname{CM}^\text{\rm RR}}
\def\CMRand{\operatorname{CM}^{\texttt{Rand}}}
\def\RandSet{ \{ \text{\rm LAP}, \text{\rm EXP}, \text{\rm RR} \} }

\renewcommand\ge{\geqslant}

\maketitle

\begin{abstract}
	Designing private voting rules is an important and pressing problem for trustworthy democracy. In this paper, under the framework of differential privacy, we propose a novel famliy of randomized voting rules based on the well-known Condorcet method, and focus on three classes of voting rules in this family: Laplacian Condorcet method ($\CMLAP_\lambda$), exponential Condorcet method ($\CMEXP_\lambda$), and randomized response Condorcet method ($\CMRR_\lambda$), where $\lambda$ represents the level of noise. We prove that all of our rules satisfy absolute monotonicity, lexi-participation, probabilistic Pareto efficiency, approximate probabilistic Condorcet criterion, and approximate SD-strategyproofness. In addition, $\CMRR_\lambda$ satisfies (non-approximate) probabilistic Condorcet criterion, while $\CMLAP_\lambda$ and $\CMEXP_\lambda$ satisfy strong lexi-participation. Finally, we regard differential privacy as a voting axiom, and discuss its relations to other axioms.
\end{abstract}

\section{Introduction}
\label{sec: intro}

\noindent Voting is a commonly used method for group decision making, where voters submit their preferences over a set of alternatives, and then a voting rule is applied to choose the winner. A major and classical paradigm behind the design and analysis of voting rules is the {\em axiomatic approach}~\cite{Plott76:Axiomatic}, under which voting rules are evaluated by their satisfaction to various normative properties, known as {\em (voting) axioms}. For example, the {\em Condorcet criterion} requires that whenever there exists a {\em Condorcet winner}, which is the alternative that beats all other alternatives in their head-to-head competitions, it must be selected as the winner.

Recently, privacy in voting has become a critical public concern. There are a series of works on examining the {\em differential privacy (DP)}~\cite{Dwork06D} of voting~\cite{shang2014application,hay2017differentially,yan2020private}.
These works mainly focused on applying several randomized mechanisms to existing voting rules, proving upper bounds on the privacy-preserving level (also called {\em privacy budget}, denoted by $\epsilon$ throughout the paper), and then evaluating the utility loss (measured by accuracy or mean square error) due to randomness.
However, the upper bounds on privacy in most of them are not tight, which means that the exact privacy-preserving level of the mechanisms is unclear.
Moreover, we are not aware of a previous work on making voting private while maintaining the satisfaction to desirable voting axioms beyond strategyproofness~\cite{DBLP:conf/ijcai/Lee15}.
Therefore, the following question remains largely open.
\begin{center}
	{\em How can we design private voting rules that satisfy desirable axiomatic properties?}
\end{center}


\paragraph{Our contributions.} We propose a novel class of randomized voting rules, denoted by $\CMRand_\lambda$, based on the celebrated {\em Condorcet method}, which chooses the Condorcet winner when it exists, where \texttt{Rand} is a randomized function (called a {\em mechanism} in DP literature) that introduces noises to pairwise comparisons between alternatives, and $\lambda$ represents the level of noise. To choose a winner, $\CMRand_\lambda$ applies \texttt{Rand} with parameter $\lambda$ to the pairwise comparisons for the input profile until a Condorcet winner appears, and then chooses it as the winner.

We focus on three classes voting rules in this family, namely $\CMLAP_\lambda$, $\CMEXP_\lambda$, and $\CMRR_\lambda$, which are obtained by applying the Laplace mechanism, exponential mechanism, and randomized response mechanism, respectively. Under these mechanisms, while it may take exponentially many iterations to obtain the winner by definition, we show that the winner can be efficiently sampled (Lemma~\ref{lem: mech2scf}).

\begin{table*}[ht]
	\centering
	\begin{tabular}{cccccccc}
		\toprule
		                 & p-Condorcet & $\alpha$-p-Condorcet                                                      & p-Pareto  & a-Mono.   & $\alpha$-SD-SP         & Lexi-Par. & Strong Lexi-Par. \\
		\midrule
		$\CMRR_\lambda$  & \ding{51}   & $\erm^{\lambda}$                                                          & \ding{51} & \ding{51} & $\erm^{(2-2m)\lambda}$ & \ding{51} & \ding{55}        \\
		$\CMEXP_\lambda$ & \ding{55}   & $\frac{ 1+\erm^{\lambda/2} }{ \left( 1+\erm^{-\lambda/2} \right)^{m-1} }$ & \ding{51} & \ding{51} & $\erm^{(2-2m)\lambda}$ & \ding{51} & \ding{51}        \\
		$\CMLAP_\lambda$ & \ding{55}   & $2\erm^\lambda\left(1-\frac{\erm^{-\lambda}}{2}\right)^{m-1}$             & \ding{51} & \ding{51} & $\erm^{(2-2m)\lambda}$ & \ding{51} & \ding{51}        \\
		\bottomrule
	\end{tabular}
	\caption{The satisfaction of our mechanisms to the voting axioms, where ``\ding{51}'' indicates that the row rule satisfies the column axiom, and ``\ding{55}'' indicates that the rule does not satisfy the axiom. The expressions in the table represent the level of satisfaction to the approximate axioms (the $\alpha$ in $\alpha$-p-Condorcet and $\alpha$-SD-SP).}
	\label{tab: mechanism-criteria}
\end{table*}

Our main technical contributions are three-fold. First, we prove that all the three classes of voting rules are differentially private by characterizing the upper and lower bounds on the privacy budget $\epsilon$ (Theorem \ref{thm: dp}). Second, we study the satisfaction of our voting rules to probabilistic variants to Condorcet criterion (p-Condorcet, requiring the winning rate of the Condorcet winner is not lower than the other alternatives), Pareto efficiency (p-Pareto, which requires the winning rate of $a$ is not lower than $b$, if $a$ Pareto dominates $b$), monotonicity (a-monotonicity, which ensures the winning rate of each alternative does not decrease when her ranking is lifted by any voter simply), strategyproofness (SD-strategyproofness, SD-SP for short, which ensures that no voter can benefit herself in the sense of stochastic dominance by changing her vote), and participation (lexi-participation, which ensures that no voter can improve the result of the voting lexicographically by withdrawing her vote). Besides, we consider the approximate version of p-Condorcet ($\alpha$-p-Condorcet, Definition \ref{def: alpha-pCond}) and SD-SP ($\alpha$-SD-SP, Definition \ref{def: alpha-SDSP}), and the strong version of lexi-participation (Definition \ref{def: strong-lexi}). Our results suggest that $\CMLAP_\lambda$ outperforms $\CMEXP_\lambda$ in all aspects examined in the paper, while $\CMRR_\lambda$ sometimes achieves better p-Condorcet but only satisfies standard lexi-participation, instead of the strong version (Theorems \ref{thm: sat-pCond} - \ref{thm: EXP-LAP-lexi-participation}). The results in the second part are summarized in Table \ref{tab: mechanism-criteria}. Third, we investigate the relations between DP and the voting axioms. We prove that Condorcet criterion and Pareto efficiency are incompatible with DP, and capture the upper bounds of satisfaction to p-Condorcet under $\epsilon$-DP (Proposition \ref{prop: Condorcet-incomp-DP} - \ref{prop: DP-Cond-upperbound}). Besides, we show that DP guarantees a lower bound of satisfaction to SD-strategyproofness (Proposition \ref{prop: dp-strategyproof}).

\paragraph{Related work and discussions.}

To the best of our knowledge, DP was first applied to the rank aggregation problem in~\cite{shang2014application}. They analyzed the error rates and derived upper bounds on them.
Lee proposed an algorithm which is both differentially private and robust to strategic manipulation for tournament voting rules~\cite{DBLP:conf/ijcai/Lee15}.
Hay et al. used Laplace mechanism and exponential mechanism to improve the privacy of Quicksort and Kemeny-Young method~\cite{hay2017differentially}.
Kohli and Laskowski explored DP, strategyproofness, and anonymity for voting on single-peaked preferences~\cite{kohli2018epsilon}.
Torra analyzed the privacy-preserving level of random dictatorship with DP, which is a well-known randomized voting rule~\cite{torra2019random}. He investigated the condition where random dictatorship is differentially private, and improved the mechanism to achieve DP for general cases.
Yan et al. made tradeoff between accuracy and privacy in rank aggregation to achieve local DP via Laplace mechanism and randomized response~\cite{yan2020private}.

Most of the above works did not consider the tradeoffs between privacy and those desirable properties, and the privacy bounds of them are usually not tight. Ao et al. proposed the exact version of distributional DP \cite{bassily2013coupled} and studied the privacy-preserving level of several voting rules, but they did not investigate how to improve the privacy~\cite{ao2020private}.
Beyond social choice, DP has also been considered in other topics of economics, such as mechanism design \cite{pai2013privacy,xiao2013privacy}, and matching and resource allocation \cite{hsu2016private,kannan2018private}.

There is a large literature on the analysis of randomized voting \cite{Brandt2017:Rolling}, most of them studied the satisfaction to axiomatic properties, e.g., complexity of manipulation \cite{walsh2012lot}, strategyproofness \cite{aziz2014incompatibility,aziz2015universal}, Pareto efficiency \cite{brandl2015incentives,gross2017vote}, participation \cite{brandl2019welfare} and monotonicity \cite{DBLP:conf/ijcai/Brandl0S18}. The fairness properties of sortition have also been investigated \cite{benade2019no,flanigan2020neutralizing,flanigan2021fair}.

The approximation of those properties was also studied.
Procaccia discussed how much a strategyproof randomized rule could approximate a deterministic rule~\cite{procaccia2010can}.
Birrell and Pass explored the approximate strategyproofness for randomized voting rules~\cite{birrell2011approximately}.
They bounded the difference of the expectations of the utility function with a parameter, but the ratio seems to be more natural for DP.

\section{Preliminaries}
\label{sec: pre}

Let $A=\{a_1,a_2,\ldots,a_m\}$ denote a set of $m\ge 2$ alternatives. For any $n\in\mathbb N$, let $N=\{1,2,\ldots,n\}$ be a set of voters. For each $j\in N$, the vote of voter $j$ is a linear order $\succ_j\in\LA$, where $\LA$ denotes the set of all linear orders over $A$, i.e., all transitive, antireflexive, antisymmetric, and complete binary relations.
Let $P=\{\succ_1,\succ_2,\ldots,\succ_n\}$ denote the \emph{(preference) profile}.
For each $j\in N$, let $P_{-j}$ denote the profile obtained from $P$ by removing $\succ_j$.
A (randomized) voting rule is a mapping $r\colon \LA^n\to \Delta(A)$, where $\Delta(A)$ denotes the set of all probability distributions on $A$.

Given a profile $P\in \LA^n$, let $S_P[a,b]$ denote the number of voters who prefer $a$ to $b$, i.e., $S_P[a,b]=|\{j\in N: a\succ_j b\}|$. Let $ w_P[a,b]=S_P[a,b]-S_P[b,a]$ be the {\em majority margin} of $a$ over $b$. Then the {\em weighted majority graph (WMG)} of $P$ can be defined: the vertices of WMG are alternatives in $A$ and there is a directed edge from $a$ to $b$ with weight $ w_P[a,b]$ if and only if $ w_P[a,b]>0$. Similarly, letting $U_P[a,b]=\sgn( w_P[a,b])$, the {\em unweighted majority graph (UMG)} of $P$ can also be defined: the set of vertices is $A$ and there is an unweighted directed edge from $a$ to $b$ if and only if $U_P[a,b]=1$, where $\sgn$ denotes the sign function, i.e., $\sgn(x)=x/|x|$ for all $x\neq 0$ and $\sgn(0)=0$. The {\em Condorcet winner} of $P$ is an alternative $a\in A$, such that $U_P[a,b]=1$ for all $b\in A\backslash \{a\}$, denoted by $\CW(P)$. Notice that the Condorcet winner is completely determined by the UMG, we also use $\CW(U_P)$ to denote the Condorcet winner claimed by the UMG.

\paragraph{\bf Axioms of voting.} A voting rule $r$ satisfies {\em Condorcet criterion}, if $\p[r(P)=\CW (P)]=1$ holds for all profile $P$ that $\CW(P)$ exists.
The rule $r$ satisfies {\em Pareto efficiency}, if $\p[ r(P)= b]=0$ for all profile $P$, where exists $a, b\in A$ that $a\succ_j b$ for all $j\in N$.
And $r$ satisfies {\em absolute monotonicity}~\cite{DBLP:conf/ijcai/Brandl0S18}, if $\p[r(P)=a]\leqslant \p[r(P')=a]$ holds for all $P,P'$, such that $P_{-j}=P'_{-j}$, $\succ_j\neq \succ'_j$, and $\succ'_j$ is a pushup of $a$ in $\succ_j$, i.e., $\succ'_j$ raises the position of $a$ in $\succ_j$, and keeps the relative position of other alternatives unchanged.
A randomized rule $r$ satisfies {\em SD-Strategyproofness} \cite{aziz2013tradeoff}, if for all $P,P'$ and $j\in N$ that $P_{-j}=P'_{-j}$ and $\succ_j\neq \succ'_j$, $\sum_{b\succ_j a} \p[r(P)=b]\geqslant \sum_{b\succ_j a} \p[r(P')=b]$, for all $a\in A$ \footnote{In fact, absolute monotonicity and SD-strategyproof are equivalent to the nonperverseness and the strategyproofness in \cite{gibbard1977manipulation}, respectively.}.
A voting rule satisfies {\em lexi-participation} if for all $P,P'$ that $P'=P\backslash \{\succ_j\}$, there does not exist $a\in A$, such that $\p[r(P)=a]< \p[r(P')=a]$ and $\p[r(P)=b]= \p[r(P')=b]$ for all $b\succ_j a$.

Differential privacy \cite{dwork2006our} requires a function to return similar outputs while receiving similar inputs.

\begin{definition}[\bf \boldmath Differential privacy]
	\label{def: dp}
	A function $r$ with domain $\mathcal{D}$ is $\epsilon$-differentially private ($\epsilon$-DP for short) if for all $O\subseteq \operatorname{Range}(r)$ and $P,P'\in \mathcal{D}$ differing on only one record,
	\begin{align*}
		\p[ r(P) \in O] \leqslant \erm^\epsilon\cdot \p[ r(P') \in O].
	\end{align*}
\end{definition}
In other words, a function $r$ is $\epsilon$-DP, if the ratio between the probabilities for the outputs of any pair of neighboring datasets to be in any given set $O$ must be upper bounded by $\erm^\epsilon$.
In the context of social choice, $r$ is a voting rule and
\begin{align*}
	\mathcal{D}=\LA^*=\LA\cup \LA^2\cup \cdots,
\end{align*}
and $P,P'$ are two profiles differing on only one voter's vote.

Notice that Definition~\ref{def: dp} does not require the $\erm^\epsilon$ upper bound to be tight. The tight upper bound is captured by {\em exact DP}, formally defined as follows.

\begin{definition}[\bf \boldmath Exact differential privacy \cite{Dwork06D}]
	A voting rule $r$ is exact DP ($\epsilon$-eDP for short) if it is $\epsilon$-DP and there does not exist $\epsilon'<\epsilon$ such that $r$ is $\epsilon'$-DP.
\end{definition}

For both DP and eDP, the privacy budget $\epsilon$ usually is decided according to the users' demand. For example, iOS 11 requires $\epsilon\leq43$ and iOS 10 requires $\epsilon\leq14$~\cite{orr2017google}\footnote{iOS has may have stronger privacy requirement for some specific data types (e.g., $\epsilon\leq 8$ for Safari Auto-play intent detection data) \cite{Apple}.}. In the next section, we provide upper and lower bounds for the required noise level for any user-defined privacy budget.

\section{Differentially Private Condorcet Methods}
\label{sec: mechanisms}

In this section, we propose a novel class of randomized voting rules. We apply three randomization mechanisms and obtain three classes of voting rules. By analyzing the worst cases, we prove that all of the three rules are differentially private, and our bounds of privacy budget are tight.

As mentioned in Section \ref{sec: pre}, the existence of Condorcet winner is completely determined by the UMG. In our mechanism, denoted by $\CMRand_\lambda$, a randomization mechanism \texttt{Rand} generates a noisy UMG for the given profile, and the voting rule outputs the Condorcet winner. If the Condorcet winner does not exist, the mechanism will generate another UMG, until the Condorcet winner exists, as shown in Mechanism \ref{algo: mech}.

\renewcommand{\algorithmcfname}{Mechanism}
\begin{algorithm}
	\caption{Randomized Condorcet Method}
	\label{algo: mech}
	\KwIn{Profile $P$, Parameter $\lambda$, Randomization $\texttt{Rand}$}
	\KwOut{Winning alternative}
	\SetKwFunction{select}{Select\_Rand}
	\SetKwFunction{rand}{Rand}
	\SetKwFunction{RCM}{CM\_Rand}
	\SetKwProg{Fn}{Function}{:}{}
	\Fn{\select{$S$, $\lambda$}}{
		Get randomized unweighted graph $U_{\lambda,P}^\texttt{Rand}$ with randomized mechanism \rand\;
		\uIf{There exists Condorcet winner $a$ for $U_{\lambda,P}^\texttt{Rand}$}
		{
			\KwRet $a$\;
		}
		\Else
		{
			\select{$S$,$\lambda$}\;
		}
	}
	\Fn{\RCM{$P$, $\lambda$}}{
	Compute $S_P[a,b]$ for all $a,b\in A$\;
	\select{$S_P$, $\lambda$}\;
	}
\end{algorithm}

\begin{remark}
	Notice that for each pair of alternatives $a,b\in A$, $U_{\lambda,P}^\texttt{Rand}[a,b]$ and $U_{\lambda,P}^\texttt{Rand}[b,a]$ are determined simultaneously, i.e., $U_{\lambda,P}^\texttt{Rand}[a,b]=1$, if and only if $U_{\lambda,P}^\texttt{Rand}[b,a]=-1$. Thus, any noisy UMG $U_{\lambda,P}^\texttt{Rand}$ produced in Mechanism \ref{algo: mech} claims at most one Condorcet winner. In other words, our mechanism is a well-defined map from $\LA^*$ to $\Delta(A)$.
\end{remark}

In the randomization process, we adopt three different methods, which are defined as follows.

\begin{definition}
	\label{def: 3random}
	Given $\lambda>0$, the three randomization mechanisms are
	\begin{itemize}
		\item Laplace mechanism: Given profile $P$, for any $a_{i}, a_j\in A$ that $i<j$, let $\hat{w}_P[a_i,a_j]=w_P[a_i,a_j] + X_{ij}$ for all $a_i,a_j\in A$ and $\hat{w}_P[a_j,a_i]=-\hat{w}_P[a_i,a_j]$, where $X_{ij} \overset{i.i.d}{\sim} \lap(1/\lambda)$\footnote{The Laplace distribution with scale parameter $1/\lambda$, of which the probability density function (PDF) is $f_\lambda(x)=\frac{\lambda}{2}\erm^{-\lambda|x|}$.}. Under such a mechanism, the noisy UMG is
		      \begin{align*}
			      U_{\lambda,P}^\text{\rm LAP}[a,b]=\sgn(\hat{w}_P[a,b]).
		      \end{align*}
		\item Exponential mechanism: For profile $P$,
		      \begin{align*}
			        & \p[ U_{\lambda,P}^\text{\rm EXP}[a,b]=1 ] \propto \erm^{\lambda\cdot S_P[a,b]/2}, \\
			        & \p[ U_{\lambda,P}^\text{\rm EXP}[a,b]=-1 ] \propto \erm^{\lambda
			      \cdot S_P[b,a]/2}.
		      \end{align*}
		\item Randomized response: For the majority margin $ w_P$ of a given profile $P$, if $ w_P[a,b]\neq 0$,
		      \begin{align*}
			      U_{\lambda,P}^\text{\rm RR}[a,b]=\begin{cases}
				      \sgn(w_P[a,b]),  & \text{ w.p. }\frac{\erm^{\lambda}}{1+\erm^{\lambda}}, \\
				      -\sgn(w_P[a,b]), & \text{ w.p. }\frac{1}{1+\erm^{\lambda}}.              \\
			      \end{cases}
		      \end{align*}
		      If $ w_P[a,b]=0$, then
		      \begin{align*}
			      \p[U_{\lambda,P}^\text{\rm RR}[a,b]=1]=\p[U_{\lambda,P}^\text{\rm RR}[a,b]=-1]=1/2.
		      \end{align*}
	\end{itemize}
\end{definition}
The three randomization mechanisms above are denoted by LAP, EXP, and RR, respectively. For each $\texttt{Rand}\in \RandSet$, the Condorcet winner may not exist for the noisy UMG $U^\texttt{Rand}_{\lambda,P}$. Thus, our mechanism may need to perform the randomization for several times. In fact, for any given profile $P$, the expected times of randomization is $\exp(\Theta(m))$ (see Appendix). However, such a mechanism with high time complexity can be sampled efficiently, as shown in the following lemma.

\begin{lemma}
	\label{lem: mech2scf}
	For any $\texttt{Rand}\in \RandSet$ and $\lambda>0$, $\CMRand_\lambda$ can be sampled as follows:
	\begin{itemize}
		\item For any $P\in \LA^*$, $\CMLAP_\lambda(P)$ is a probability distribution in $\Delta(A)$, such that for any $a\in A$,
		      \begin{align*}
			      \p[ \CMLAP_\lambda(P)= & a ] \propto \prod_{b\neq a}\CDF(w_P[a,b]),
		      \end{align*}
		      where $F_\lambda(x)=\int_{-\infty}^{x} f_\lambda(t)\drm t$ is the cumulative distribution function (CDF) of $\lap(1/\lambda)$.
		\item For any $P\in \LA^*$, $\CMEXP_\lambda(P)$ is a probability distribution in $\Delta(A)$, such that for any $a\in A$,
		      \begin{align*}
			      \p[ \CMEXP_\lambda(P)=a ]\propto \prod_{b\neq a}\frac{1}{1+\erm^{-\lambda\cdot w_P[a,b]/2}}.
		      \end{align*}
		\item For any $P\in \LA^*$, $\CMRR_\lambda(P)$ is a probability distribution in $\Delta(A)$, such that for any $a\in A$,
		      \begin{align*}
			      \p[ \CMRR_\lambda(P)=a ]\propto \frac{\erm^{\lambda\cdot |B(a)|}}{(1+\erm^\lambda)^{m-1}},
		      \end{align*}
		      where $B(a)=\{b\in A: S_P[a,b]>S_P[b,a]\}$.
	\end{itemize}
\end{lemma}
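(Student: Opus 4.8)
The plan is to read Mechanism~\ref{algo: mech} as a rejection sampler and then reduce the lemma to a single per-edge probability computation for each mechanism. First, each invocation of \texttt{Select\_Rand} draws a fresh noisy UMG $U^\texttt{Rand}_{\lambda,P}$, accepts it if it has a Condorcet winner, and otherwise recurses. For every $\texttt{Rand}\in\RandSet$ and every fixed $a\in A$, the event ``$U^\texttt{Rand}_{\lambda,P}[a,b]=1$ for all $b\neq a$'' has strictly positive probability --- the Laplace noise has full support, and for EXP and RR each edge orientation occurs with probability bounded away from $0$ --- so $Q:=\p[U^\texttt{Rand}_{\lambda,P}\text{ admits a Condorcet winner}]>0$ and the recursion terminates almost surely. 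Writing $q_a:=\p[a=\CW(U^\texttt{Rand}_{\lambda,P})]$ for a single round (these events are pairwise disjoint since a noisy UMG has at most one Condorcet winner, by the Remark, and $\sum_a q_a=Q$), independence across rounds gives $\p[\CMRand_\lambda(P)=a]=\sum_{k\ge 1}(1-Q)^{k-1}q_a=q_a/Q$. Hence $\p[\CMRand_\lambda(P)=a]\propto q_a$, and it remains only to evaluate $q_a$.

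Second, for each of the three mechanisms the edge variables $\{U^\texttt{Rand}_{\lambda,P}[a,b]\}_{b\neq a}$ are mutually independent, because each is a deterministic function of the single independent random object attached to the unordered pair $\{a,b\}$ (the draw $X_{ij}$ for LAP, the independent coin for EXP and RR), and these pairs are distinct as $b$ ranges over $A\setminus\{a\}$. Therefore $q_a=\prod_{b\neq a}\p[U^\texttt{Rand}_{\lambda,P}[a,b]=1]$, which reduces the lemma to a one-edge computation. For LAP, with $X\sim\lap(1/\lambda)$ symmetric about $0$ and $\hat w_P[a,b]\overset{d}{=}w_P[a,b]+X$, we get $\p[U^\text{\rm LAP}_{\lambda,P}[a,b]=1]=\p[w_P[a,b]+X>0]=\p[X<w_P[a,b]]=\CDF(w_P[a,b])$ (the $i<j$ sign bookkeeping in Definition~\ref{def: 3random} is consistent because $\hat w_P[a_j,a_i]=-\hat w_P[a_i,a_j]$ and $-X$ is again symmetric). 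For EXP, normalizing the two unnormalized weights yields $\p[U^\text{\rm EXP}_{\lambda,P}[a,b]=1]=\erm^{\lambda S_P[a,b]/2}/(\erm^{\lambda S_P[a,b]/2}+\erm^{\lambda S_P[b,a]/2})=1/(1+\erm^{-\lambda w_P[a,b]/2})$. For RR, $\p[U^\text{\rm RR}_{\lambda,P}[a,b]=1]$ equals $\erm^{\lambda}/(1+\erm^{\lambda})$ when $b\in B(a)$ and $1/(1+\erm^{\lambda})$ when $a\in B(b)$, so the product over $b\neq a$ collapses to $\erm^{\lambda|B(a)|}/(1+\erm^{\lambda})^{m-1}$. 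Substituting each per-edge probability into $q_a=\prod_{b\neq a}(\cdots)$ and discarding the $a$-independent normalizer gives the three displayed distributions.

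The one place needing care is ties in the RR case: when $w_P[a,b]=0$ the edge is fair and contributes a factor $\tfrac12=\tfrac{1}{1+\erm^{0}}$ rather than $\tfrac{1}{1+\erm^{\lambda}}$, so the clean formula $\propto\erm^{\lambda|B(a)|}/(1+\erm^{\lambda})^{m-1}$ is exact when the weighted majority graph of $P$ has no zero-weight edge (e.g.\ $n$ odd); in general each $q_a$ picks up an extra factor $2^{-t_a}$ with $t_a$ the number of alternatives tied with $a$, which I would either carry through the normalization or remove using the convention $w_P[a,b]\neq 0$ already adopted in stating the main branch of the RR mechanism. The only other, and comparatively minor, subtlety is making the rejection identity $\p[\text{output}=a]=q_a/Q$ a rigorous statement about the recursive procedure rather than an informal geometric-series argument; this follows by unfolding the recursion one level at a time and invoking independence of the rounds. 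Otherwise the proof is essentially mechanical.
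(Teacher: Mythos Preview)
Your argument is correct and follows the same skeleton as the paper's proof: interpret Mechanism~\ref{algo: mech} as conditioning on the existence of a Condorcet winner, factor the single-round probability $q_a$ over independent edges, and compute each factor mechanism by mechanism. You are in fact more careful than the paper on several points --- you justify termination, write $q_a/Q$ rather than asserting equality, and flag the RR tie case that the paper's proof silently ignores.

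One discrepancy worth knowing about: for the Laplace mechanism the paper's proof does \emph{not} use a single $\lap(1/\lambda)$ noise as in Definition~\ref{def: 3random} and the lemma statement, but instead treats the per-edge noise as $X_{ab}-X_{ba}$, a difference of two independent $\lap(1/\lambda)$ variables, and redefines $\PDF$ and $\CDF$ accordingly inside the proof. Your computation matches the lemma statement's $F_\lambda$ (CDF of a single Laplace) and the mechanism as written; the paper's proof uses a different $F_\lambda$. This is an internal inconsistency in the paper rather than an error on your part, but be aware that later results (e.g.\ the bounds in Theorem~\ref{thm: dp}) may implicitly rely on whichever convention the authors actually intended.
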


\begin{proof}
	Since $\CMRand_\lambda$ will keep performing the randomization on $S$ until the Condorcet winner for $U_{\lambda,P}^\texttt{Rand}$ exists, the winning probability of each $a\in A$ is determined by the conditional probability $\p[ a~\text{wins}~|~\CW(U_{\lambda,P}^\texttt{Rand})~\text{exists} ]$. First, for $\CMLAP_\lambda$, we have
	\begin{align*}
		  & \p[ a~\text{wins}~|~\CW(U_{\lambda,P}^\text{\rm LAP})~\text{exists} ] = \prod_{b\in A\backslash\{a\} }\p[U_{\lambda,P}^\text{\rm LAP}[a,b]=1] \\
		= & \prod_{b\in A\backslash\{a\} }\p[X_{ba}-X_{ab}<w_P[a,b]].
	\end{align*}
	For any $a, b\in A$, the probability density function (PDF) of $X_{ab}-X_{ba}$ is (see Appendix A for the proof)
	\begin{align*}
		\label{equ: pdf}
		\PDF(x) = \frac{\lambda+\lambda^2 |x|}{4}\cdot\erm^{-\lambda |x|}.
	\end{align*}
	Therefore, the cumulative distribution function (CDF) is
	\begin{equation*}
		\begin{aligned}
			\CDF(x) = \frac{1}{2} + \sgn(x)\cdot \left(\frac{1}{2}-\frac{2+\lambda |x|}{4}\erm^{-\lambda |x|} \right).
		\end{aligned}
	\end{equation*}
	Then we have
	\begin{align*}
		\p[ a~\text{wins}~|~\CW(U_{\lambda,P}^\text{\rm LAP})~\text{exists} ] = \prod_{b\in A\backslash\{a\}}\CDF(w_P[a,b]),
	\end{align*}
	For $\CMEXP_\lambda$, we have
	\begin{align*}
		  & \p[ a~\text{wins}~|~\CW(U_{\lambda,P}^\text{\rm EXP})~\text{exists} ] = \prod_{b\in A\backslash\{a\}}\p[U_{\lambda,P}^\text{\rm EXP}[a,b]=1]                                                                 \\
		= & \prod_{b\in A\backslash\{a\}}\frac{\erm^{\lambda\cdot S_P[a,b]/2}}{\erm^{\lambda\cdot S_P[a,b]/2}+\erm^{\lambda\cdot S_P[b,a]/2}} =\prod_{b\in A\backslash\{a\}}\frac{1}{1+\erm^{-\lambda\cdot w_P[a,b]/2}}.
	\end{align*}
	For $\CMRR_\lambda$, we have
	\begin{align*}
		  & \p[ a~\text{wins}~|~\CW(U_{\lambda,P}^\text{\rm RR})~\text{exists} ] = \prod_{b\in A\backslash\{a\}}\p[U_{\lambda,P}^\text{\rm RR}[a,b]=1]                          \\
		= & \prod_{b\in B(a)}\frac{\erm^\lambda}{1+\erm^\lambda}\cdot \prod_{b\notin B(a)}\frac{1}{1+\erm^\lambda} = \frac{\erm^{\lambda\cdot |B(a)|}}{(1+\erm^\lambda)^{m-1}},
	\end{align*}
	where $B(a)=\{b\in A: S_P[a,b]>S_P[b,a]\}$, which completes the proof.
\end{proof}

Since there are totally $m$ alternatives, and the value of $\p[ \CMRand_\lambda(P)=a ]$ for each $a\in A$ in Lemma \ref{lem: mech2scf} can be computed in $O(m)$ time, $\CMRand_\lambda$ can be sampled in $O(m^2)$ time.

Now, we are ready to show the DP bounds of our rules. For simplicity , we use $G_\lambda(x)$ to denote $\p[U_{\lambda,P}^\texttt{Rand}[a,b]=1]$, where $w_P[a,b]=x$. For example, when $\texttt{Rand} = {\rm LAP}$, $G_\lambda(x)=F_\lambda(x)$; when $\texttt{Rand} = {\rm EXP}$, $G_\lambda(x)=\frac{1}{1+\erm^{-\lambda x/2}}$.

\begin{theorem}
	\label{thm: dp}
	Given $\lambda>0$ and $\texttt{Rand}$, suppose that $\CMRand_\lambda$ satisfies $\epsilon$-eDP. When $\texttt{Rand}\in \{ {\rm LAP},{\rm EXP} \}$
	\begin{align*}
		\ln \left(\frac{G_\lambda^{m-1}(2)-G_\lambda^{m-1}(-2)}{G_\lambda(2)-G_\lambda(-2)}\cdot \frac{2^{m-2}}{m-1}\right)+(m-1)\lambda \leqslant \epsilon \leqslant 2(m-1)\lambda.
	\end{align*}
	When $\texttt{Rand}={\rm RR}$, $(m-1)\lambda\leqslant \epsilon \leqslant 2(m-1)\lambda$.
\end{theorem}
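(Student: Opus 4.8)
The plan is to reduce the DP analysis to a one-dimensional optimization over neighboring profiles and to exploit the product structure of the sampling formulas from Lemma~\ref{lem: mech2scf}. Fix a mechanism $\texttt{Rand}$ and two neighboring profiles $P, P'$ differing on one voter's vote. Changing a single vote shifts each majority margin $w_P[a,b]$ by at most $2$ (and changes the sign of each $U_P[a,b]$ only in one direction depending on whether the swapped pair moved toward or away from $a$). By the characterization in Lemma~\ref{lem: mech2scf}, the winning probability of $a$ is $q_P(a)/\sum_{c} q_P(c)$ where, writing everything in terms of $G_\lambda$, we have $q_P(a)=\prod_{b\neq a} G_\lambda(w_P[a,b])$ for $\texttt{Rand}\in\{\mathrm{LAP},\mathrm{EXP}\}$, and $q_P(a)=\erm^{\lambda|B_P(a)|}/(1+\erm^\lambda)^{m-1}$ for $\texttt{Rand}=\mathrm{RR}$. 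Thus
\[
\frac{\p[\CMRand_\lambda(P)=a]}{\p[\CMRand_\lambda(P')=a]}
= \frac{q_P(a)}{q_{P'}(a)}\cdot\frac{\sum_c q_{P'}(c)}{\sum_c q_P(c)},
\]
and since a ratio of sums is bounded by the max of the termwise ratios, it suffices to bound $q_P(a)/q_{P'}(a)$ over all $a$ and all neighboring $P,P'$, and separately to recognize that the whole expression is maximized (for the lower bound on $\epsilon$) at a specific worst-case profile.

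For the \emph{upper bound} $\epsilon\le 2(m-1)\lambda$: each factor $G_\lambda(w_P[a,b])/G_\lambda(w_{P'}[a,b])$ is controlled because $|w_P[a,b]-w_{P'}[a,b]|\le 2$ and $G_\lambda$ satisfies $G_\lambda(x)/G_\lambda(x-2)\le \erm^{2\lambda}$ for every $x$ — for EXP this is immediate from $G_\lambda(x)=1/(1+\erm^{-\lambda x/2})$, and for LAP it follows from the explicit form of $F_\lambda$ and its density (the log-derivative of $F_\lambda$ is at most $\lambda$). Taking the product over the $m-1$ opponents $b\neq a$ gives $q_P(a)/q_{P'}(a)\le \erm^{2(m-1)\lambda}$; the normalization ratio is then absorbed by the same bound applied in the other direction, so only one factor of $\erm^{2(m-1)\lambda}$ survives. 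For RR the argument is even simpler: $|B_P(a)|$ and $|B_{P'}(a)|$ differ by at most... actually one must be a little careful, since flipping one vote can change $|B_P(a)|$ by at most $1$ for the voter's own ``moved'' pair but the same swap affects other alternatives' $B$-sets, so $|B_{P'}(a)|-|B_P(a)|\in\{-(m-1),\dots,m-1\}$ in the worst chained case; this yields $q_P(a)/q_{P'}(a)\le \erm^{(m-1)\lambda}$ and, combining with the normalizer, $\epsilon\le 2(m-1)\lambda$, and also the clean lower bound $\epsilon\ge(m-1)\lambda$ by exhibiting a profile where a single alternative flips from losing all pairs to winning all pairs.

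For the sharper \emph{lower bound} in the LAP/EXP case, the plan is to construct an explicit worst-case pair $(P,P')$ and evaluate the ratio exactly. The natural candidate is a profile whose WMG has a near-tie structure: take $w_P[a,b]=0$ for all $b\neq a$ (or as close as parity allows) so that $q_P(a)=G_\lambda(0)^{m-1}$, while the denominator $\sum_c q_P(c)$ is as large as possible; then let $P'$ be a neighbor that pushes $a$ up so that $w_{P'}[a,b]=2$ for all $b$, making $q_{P'}(a)=G_\lambda(2)^{m-1}$ and shrinking the other terms. Computing $\frac{q_{P'}(a)/q_P(a)}{(\sum_c q_{P'}(c))/(\sum_c q_P(c))}$ on this instance, and using $G_\lambda(-2)=1-G_\lambda(2)$ together with the telescoping identity $G_\lambda^{m-1}(2)-G_\lambda^{m-1}(-2)=(G_\lambda(2)-G_\lambda(-2))\sum_{k=0}^{m-2}G_\lambda(2)^kG_\lambda(-2)^{m-2-k}$, should reproduce exactly the claimed expression $\ln\!\big(\tfrac{G_\lambda^{m-1}(2)-G_\lambda^{m-1}(-2)}{G_\lambda(2)-G_\lambda(-2)}\cdot\tfrac{2^{m-2}}{m-1}\big)+(m-1)\lambda$, where the $2^{m-2}/(m-1)$ factor comes from bounding/counting the contribution of the $m-1$ alternatives other than $a$ to the normalizer and the $(m-1)\lambda$ comes from the ratio of the $G_\lambda(2)$-type factors to $G_\lambda(0)$-type factors (using $G_\lambda(0)=1/2$ and $G_\lambda(2)\ge \tfrac12\erm^{\lambda}\cdot(\text{const})$ in the EXP case, with the analogous Laplace estimate).

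The main obstacle I anticipate is the lower-bound construction: one must identify a \emph{realizable} profile (the WMG must be consistent with some linear-order profile, and the desired neighbor must differ in exactly one vote while achieving the extremal margins on \emph{all} $m-1$ pairs simultaneously), and then carry out the exact algebraic simplification of the resulting ratio into the stated closed form — in particular correctly accounting for how much the other alternatives' weights $q_P(c)$ move, which is where the $2^{m-2}/(m-1)$ combinatorial factor must be matched rather than merely bounded. The upper bound, by contrast, is routine once the per-edge Lipschitz-type estimate $G_\lambda(x)/G_\lambda(x-2)\le\erm^{2\lambda}$ is verified for each of the three mechanisms.
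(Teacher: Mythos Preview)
Your high-level decomposition matches the paper's exactly: write the probability ratio as $(q_P(a)/q_{P'}(a))\cdot(\sum_c q_{P'}(c)/\sum_c q_P(c))$ and bound each factor via a per-edge estimate on $G_\lambda$. However, two steps in your execution do not go through.

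\textbf{Upper bound.} Your per-edge bound $G_\lambda(x)/G_\lambda(x-2)\le \erm^{2\lambda}$ is valid but too loose, and your ``absorption'' claim is false: with a per-edge bound of $\erm^{2\lambda}$ you get $q_P(a)/q_{P'}(a)\le \erm^{2(m-1)\lambda}$, and the normalizer ratio is bounded by the same quantity, so the product is only $\le \erm^{4(m-1)\lambda}$. Nothing cancels. The paper instead proves the sharper per-edge estimate $\sup_x G_\lambda(x+2)/G_\lambda(x)\le \erm^{\lambda}$ (for EXP this is exactly $\lim_{x\to-\infty}\erm^{\lambda}(1+\erm^{\lambda x/2})/(1+\erm^{\lambda(x+2)/2})=\erm^{\lambda}$, not $\erm^{2\lambda}$), so that each of the two factors is $\le \erm^{(m-1)\lambda}$ and the product is $\le \erm^{2(m-1)\lambda}$. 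You need this tighter edge bound, not an absorption argument.

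\textbf{Lower bound.} Your construction (margins $0$ versus $2$ for the favored alternative $a$) is not the one that realizes the stated expression, and in particular it cannot produce the $(m-1)\lambda$ term. The paper's construction is different in two essential ways. First, it looks at the \emph{losing} alternative $a_m$, with margins $w_P[a_m,a_j]=2-n$ versus $w_{P'}[a_m,a_j]=-n$, and sends $n\to\infty$; the factor $\erm^{(m-1)\lambda}$ arises precisely from the tail ratio $\big(G_\lambda(2-n)/G_\lambda(-n)\big)^{m-1}\to \erm^{(m-1)\lambda}$, which your margin-$0$-to-$2$ instance never touches. Second, the combinatorial factor $\tfrac{2^{m-2}}{m-1}\cdot\tfrac{G_\lambda^{m-1}(2)-G_\lambda^{m-1}(-2)}{G_\lambda(2)-G_\lambda(-2)}$ comes from the \emph{normalizer} side: in $P$ the $m-1$ non-$a_m$ alternatives are all pairwise tied ($w_P=0$, giving $(m-1)G_\lambda(0)^{m-2}=(m-1)/2^{m-2}$ after the $G_\lambda(n)\to1$ limit), while in $P'$ they form a linear order with gaps $\pm2$, giving $\sum_{k=0}^{m-2}G_\lambda(2)^kG_\lambda(-2)^{m-2-k}$, which telescopes to the stated quotient. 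Your proposal identifies the telescoping identity but attaches it to the wrong side of the ratio and to the wrong alternative; the realizability issue you flag is real, and the paper resolves it with an explicit three-block profile (sizes $k,\,k-1,\,1$ versus $k+1,\,k-1$) rather than an abstract WMG.

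For RR your sketch is essentially the paper's: two mirror-image profiles in which one vote flips every pairwise comparison, so the normalizers coincide and the numerator ratio is exactly $\erm^{(m-1)\lambda}$.
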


\begin{proof}
	To prove the upper bound, we only need to prove that for each $\texttt{Rand}\in \RandSet$, and neighboring profiles $P,P'$, $\frac{\p[\CMRand_\lambda(P) = a] }{\p[\CMRand(P') = a]} \leqslant \erm^{2(m-1)\lambda}$.
	W.l.o.g., we make comparison between the winning probabilities of $a_1$ for profiles $P$ and $P'$. According to Lemma \ref{lem: mech2scf}, for any $\texttt{Rand}\in \RandSet$, we have
	\begin{align*}
		\frac{\p[\CMRand_\lambda(P)=a_1]}{\p[\CMRand_\lambda(P')=a_1]}
		= & \frac{ \prod\limits_{i=2}^m \p[U_{\lambda,P}^\texttt{Rand}[a_1,a_i]=1] / \sum\limits_{j=1}^m \prod\limits_{i\neq j}\p[U_{\lambda,P}^\texttt{Rand}[a_j,a_i]=1] }{ \prod\limits_{i=2}^m \p[U_{\lambda,P'}^\texttt{Rand}[a_1,a_i]=1] / \sum\limits_{j=1}^m \prod\limits_{i\neq j}\p[U_{\lambda,P'}^\texttt{Rand}[a_j,a_i]=1] }                \\
		= & \frac{ \prod\limits_{i=2}^m \p[U_{\lambda,P}^\texttt{Rand}[a_1,a_i]=1] }{ \prod\limits_{i=2}^m \p[U_{\lambda,P'}^\texttt{Rand}[a_1,a_i]=1] } \cdot \frac{ \sum\limits_{j=1}^m \prod\limits_{i\neq j}\p[U_{\lambda,P'}^\texttt{Rand}[a_j,a_i]=1] }{ \sum\limits_{j=1}^m \prod\limits_{i\neq j}\p[U_{\lambda,P}^\texttt{Rand}[a_j,a_i]=1] }.
	\end{align*}
	When $\texttt{Rand}={\rm LAP}$, since for any $a_j\in A$,
	\begin{align*}
		  & \frac{ \prod\limits_{i\neq j} \p[U_{\lambda,P}^\text{\rm LAP}[a_j,a_i]=1] }{ \prod\limits_{i\neq j} \p[U_{\lambda,P'}^\text{\rm LAP}[a_j,a_i]=1] } \leqslant \max_{P,P'}\frac{\prod\limits_{i\neq j} \CDF(w_P[a_j,a_i])}{\prod\limits_{i\neq j} \CDF(w_{P'}[a_j,a_i])} = \max_{P,P'}\prod\limits_{i\neq j} \frac{ \CDF(w_P[a_j,a_i]) }{\CDF(w_{P'}[a_j,a_i]) } \leqslant \left( \max_{P,P'}\frac{\CDF(w_P[a_j,a_i])}{\CDF(w_{P'}[a_j,a_i])} \right)^{m-1}.
	\end{align*}
	Letting $x= w_{P'}[a_1,a_i]$, we have $ w_P[a_1,a_i]\in [x-2,x+2]$. Due to the monotonicity of CDF, we have
	\begin{align*}
		  & \max_{P,P'}\frac{\p[\CMLAP_\lambda(P)=a_1]}{\p[\CMLAP_\lambda(P')=a_1]}\leqslant \left(\max_{x\in\mathbb{Z}}\max_{|t-x|\leqslant 2}\frac{\CDF(t)}{\CDF(x)}\right)^{m-1} = \left( \max_{x\in\mathbb{Z}} \frac{ \CDF(x+2) }{ \CDF(x) } \right)^{m-1} = \erm^{(m-1)\lambda}.
	\end{align*}
	Then it follows that
	\begin{align*}
		\frac{ \sum\limits_{j=1}^m \prod\limits_{i\neq j}\p[U_{\lambda,P'}^\text{\rm LAP}[a_j,a_i]=1] }{ \sum\limits_{j=1}^m \prod\limits_{i\neq j}\p[U_{\lambda,P}^\text{\rm LAP}[a_j,a_i]=1] } \leqslant \erm^{(m-1)\lambda}.
	\end{align*}
	Further, we have
	\begin{align*}
		\frac{\p[\CMLAP_\lambda(P)=a_1]}{\p[\CMLAP_\lambda(P')=a_1]} \leqslant \erm^{(m-1)\lambda}\cdot \erm^{(m-1)\lambda} = \erm^{2(m-1)\lambda}.
	\end{align*}

	When $\texttt{Rand}={\rm EXP}$, for any profile $P\in \LA^n$, we have
	\begin{align*}
		\p[U_{\lambda,P}^\text{\rm EXP}[a_1,a_i]=1] = \frac{\erm^{\lambda\cdot w_P[a_1,a_i]/2}}{1+ \erm^{\lambda\cdot w_P[a_1,a_i]/2}},
	\end{align*}
	which indicates that
	\begin{align*}
		\frac{\prod\limits_{i=2}^m \p[U_{\lambda,P}^\text{\rm EXP}[a_1,a_i]=1]}{\prod\limits_{i=2}^m \p[U_{\lambda,P'}^\text{\rm EXP}[a_1,a_i]=1]} & \leqslant \max_{P,P'} \prod_{i=2}^m \frac{\erm^{\lambda\cdot w_P[a_1,a_i]/2}}{1+\erm^{\lambda\cdot w_P[a_1,a_i]/2}}\cdot \frac{1+\erm^{\lambda\cdot w_{P'}[a_1,a_i]/2}}{\erm^{\lambda\cdot w_{P'}[a_1,a_i]/2}}                                                                                          \\
		                                                                                                                                           & =          \prod_{i=2}^m \max_{x\in\mathbb{Z}}\frac{\erm^{\lambda (x+2)/2}}{1+\erm^{\lambda (x+2)/2}} \cdot \frac{1+\erm^{\lambda\cdot x/2}}{\erm^{\lambda\cdot x/2}} =          \left( \max_{x\in \mathbb{Z}} \frac{\erm^{\lambda}(1+\erm^{\lambda\cdot x/2})}{1+\erm^{\lambda (2+x)/2}} \right)^{m-1} \\
		                                                                                                                                           & \leqslant  \left( \lim\limits_{x\to -\infty} \frac{\erm^{\lambda}(1+\erm^{\lambda\cdot x/2})}{1+\erm^{\lambda (2+x)/2}} \right)^{m-1} = \erm^{(m-1)\lambda}.
	\end{align*}
	Further, we have
	\begin{align*}
		\frac{ \sum\limits_{j=1}^m \prod\limits_{i\neq j}\p[U_{\lambda,P'}^\text{\rm EXP}[a_j,a_i]=1] }{ \sum\limits_{j=1}^m \prod\limits_{i\neq j}\p[U_{\lambda,P}^\text{\rm EXP}[a_j,a_i]=1] } \leqslant \erm^{(m-1)\lambda}.
	\end{align*}
	As a consequence,
	\begin{align*}
		\frac{\p[\CMEXP_\lambda(P)=a_1]}{\p[\CMEXP_\lambda(P')=a_1]} = \frac{ \prod\limits_{i=2}^m \p[U_{\lambda,P}^\text{\rm EXP}[a_1,a_i]=1] }{ \prod\limits_{i=2}^m \p[U_{\lambda,P'}^\text{\rm EXP}[a_1,a_i]=1] } \cdot \frac{ \sum\limits_{j=1}^m \prod\limits_{i\neq j}\p[U_{\lambda,P'}^\text{\rm EXP}[a_j,a_i]=1] }{ \sum\limits_{j=1}^m \prod\limits_{i\neq j}\p[U_{\lambda,P}^\text{\rm EXP}[a_j,a_i]=1] } \leqslant \erm^{2(m-1)\lambda}.
	\end{align*}

	Finally, when $\texttt{Rand}={\rm RR}$, for any profile $P\in \LA^n$, we have
	\begin{align*}
		\prod_{i=2}^m \p[U_{\lambda,P}^\text{\rm RR}[a_1,a_i]=1] = \frac{\erm^{\lambda |\{a_i\in A: U_P[a_1,a_i]>0\}|}}{(1+\erm^{\lambda})^{m-1}}.
	\end{align*}
	Thus, for any neighboring profiles $P,P'$,
	\begin{align*}
		  & \frac{\prod\limits_{i=2}^m \p[U_{\lambda,P}^\text{\rm RR}[a_1,a_i]=1]}{\prod\limits_{i=2}^m \p[U_{\lambda,P}^\text{\rm RR}[a_1,a_i]=1]} = \frac{\erm^{\lambda |\{a_i\in A: U_P[a_1,a_i]>0\}|}}{\erm^{\lambda |\{a_i\in A: U_{P'}[a_1,a_i]>0\}|}}\leqslant \frac{\max\limits_{P\in\LA^n}\erm^{\lambda |\{a_i\in A: U_P[a_1,a_i]>0\}|}}{\min\limits_{P'\in\LA^n}\erm^{\lambda |\{a_i\in A: U_{P'}[a_1,a_i]>0\}|}} =\erm^{(m-1)\lambda}.
	\end{align*}
	Further,
	\begin{align*}
		\frac{\sum\limits_{j=1}^m\prod\limits_{i\neq j} \p[U_{\lambda,P}^\text{\rm RR}[a_1,a_i]=1]}{\sum\limits_{j=1}^m\prod\limits_{i\neq j} \p[U_{\lambda,P}^\text{\rm RR}[a_1,a_i]=1]} \leqslant \frac{\sum\limits_{j=1}^m \erm^{(m-1)\lambda}\cdot \prod\limits_{i\neq j} \p[U_{\lambda,P}^\text{\rm RR}[a_1,a_i]=1]}{\sum\limits_{j=1}^m\prod\limits_{i\neq j} \p[U_{\lambda,P}^\text{\rm RR}[a_1,a_i]=1]} = \erm^{2(m-2)\lambda}.
	\end{align*}
	Then we have
	\begin{align*}
		\frac{\p[\CMRR_\lambda(P)=a_1]}{\p[\CMRR_\lambda(P')=a_1]} = \frac{ \prod\limits_{i=2}^m \p[U_{\lambda,P}^\text{\rm RR}[a_1,a_i]=1] }{ \prod\limits_{i=2}^m \p[U_{\lambda,P'}^\text{\rm RR}[a_1,a_i]=1] } \cdot \frac{ \sum\limits_{j=1}^m \prod\limits_{i\neq j}\p[U_{\lambda,P'}^\text{\rm RR}[a_j,a_i]=1] }{ \sum\limits_{j=1}^m \prod\limits_{i\neq j}\p[U_{\lambda,P}^\text{\rm RR}[a_j,a_i]=1] } \leqslant \erm^{2(m-1)\lambda},
	\end{align*}
	which completes the proof for the upper bound.

	For the lower bounds of $\CMLAP_\lambda$ and $\CMEXP_\lambda$, we only need to show that there exists neighboring profiles $P,P'$, and alternative $a$,
	\begin{align*}
		\frac{\p[\CMRand(P)=a]}{\p[\CMRand(P')=a]} \geqslant  \frac{G_\lambda^{m-1}(2)-G_\lambda^{m-1}(-2)}{G_\lambda(2)-G_\lambda(-2)}\cdot \frac{2^{m-2}}{m-1} \cdot \erm^{(m-1)\lambda}.
	\end{align*}
	Consider the following profile $P$ (let $m=2k$):
	\begin{itemize}
		\item $k$ voters: $a_1\succ a_2\succ\cdots\succ a_m$;
		\item $k-1$ voters: $a_{m-1}\succ a_{m-2}\succ \cdots a_1 \succ a_m$;
		\item $1$ voter: $a_m\succ a_{m-1} \succ \cdots \succ a_1$.
	\end{itemize}
	And another profile $P'$:
	\begin{itemize}
		\item $k+1$ voters: $a_1\succ' a_2\succ'\cdots\succ' a_m$;
		\item $k-1$ voters: $a_{m-1}\succ' a_{m-2}\succ' \cdots a_1 \succ' a_m$;
	\end{itemize}
	It is quite easy to verify that $P$ and $P'$ are neighboring datasets. Further, we have
	\begin{align*}
		w_P[a_i,a_j] = \begin{cases}
			0,   & 1\leqslant i,j \leqslant m-1; \\
			n-2, & i<m,j=m;                      \\
			2-n, & i=m,j<m.
		\end{cases} \qquad w_{P'}[a_i,a_j] = \begin{cases}
			2,  & 1\leqslant i < j \leqslant m-1; \\
			-2, & 1\leqslant j < i \leqslant m-1; \\
			n   & i<m, j=m;                       \\
			-n  & i=m, j<m.
		\end{cases}
	\end{align*}
	Therefore,
	\begin{align*}
		\frac{\sum\limits_{i=1}^m \prod\limits_{j\neq i} \p[U_{\lambda,P'}^\texttt{Rand}[a_i,a_j]=1]}{\sum\limits_{i=1}^m \prod\limits_{j\neq i} \p[U_{\lambda,P}^\texttt{Rand}[a_i,a_j]=1]} = \frac{G_\lambda(n-2) \sum\limits_{k=0}^{m-2} G_\lambda^k(2)G_\lambda^{m-k-2}(-2) + G_\lambda^{m-1}(2-n)}{(m-1)G_\lambda(n)G_\lambda^{m-2}(0) + G_\lambda^{m-1}(-n)}
	\end{align*}
	When $n\to +\infty$, we have $G(n)\to 1$ and $G(-n)\to 0$. Then
	\begin{align*}
		\lim\limits_{n\to+\infty}\frac{\sum\limits_{i=1}^m \prod\limits_{j\neq i} \p[U_{\lambda,P'}^\texttt{Rand}[a_i,a_j]=1]}{\sum\limits_{i=1}^m \prod\limits_{j\neq i} \p[U_{\lambda,P}^\texttt{Rand}[a_i,a_j]=1]} = \frac{\sum\limits_{k=0}^{m-2} G_\lambda^k(2)G_\lambda^{m-k-2}(-2)}{(m-1)G_\lambda^{m-2}(0)} = \frac{G_\lambda^{m-1}(2)-G_\lambda^{m-1}(-2)}{G_\lambda(2)-G_\lambda(-2)}\cdot \frac{2^{m-2}}{m-1}.
	\end{align*}
	In other words, when $\texttt{Rand} \in \{ {\rm LAP}, {\rm EXP} \}$,
	\begin{align*}
		\frac{\sum\limits_{i=1}^m \prod\limits_{j\neq i} \p[U_{\lambda,P'}^\texttt{Rand}[a_i,a_j]=1]}{\sum\limits_{i=1}^m \prod\limits_{j\neq i} \p[U_{\lambda,P}^\texttt{Rand}[a_i,a_j]=1]} \geqslant \frac{G_\lambda^{m-1}(2)-G_\lambda^{m-1}(-2)}{G_\lambda(2)-G_\lambda(-2)}\cdot \frac{2^{m-2}}{m-1}.
	\end{align*}
	Then we have
	\begin{align*}
		\frac{\p[\CMRand(P')=a_m]}{\p[\CMRand(P)=a_m]} & = \frac{\prod\limits_{j\neq i} \p[U_{\lambda,P}^\texttt{Rand}[a_i,a_j]=1]}{\prod\limits_{j\neq i} \p[U_{\lambda,P'}^\texttt{Rand}[a_i,a_j]=1]} \cdot \frac{\sum\limits_{i=1}^m \prod\limits_{j\neq i} \p[U_{\lambda,P'}^\texttt{Rand}[a_i,a_j]=1]}{\sum\limits_{i=1}^m \prod\limits_{j\neq i} \p[U_{\lambda,P}^\texttt{Rand}[a_i,a_j]=1]} \\
		                                               & \geqslant \frac{G_\lambda^{m-1}(2)-G_\lambda^{m-1}(-2)}{G_\lambda(2)-G_\lambda(-2)}\cdot \frac{2^{m-2}}{m-1} \cdot \frac{\prod\limits_{j\neq i} \p[U_{\lambda,P}^\texttt{Rand}[a_i,a_j]=1]}{\prod\limits_{j\neq i} \p[U_{\lambda,P'}^\texttt{Rand}[a_i,a_j]=1]}                                                                           \\
		                                               & \geqslant \frac{G_\lambda^{m-1}(2)-G_\lambda^{m-1}(-2)}{G_\lambda(2)-G_\lambda(-2)}\cdot \frac{2^{m-2}}{m-1} \cdot \left( \lim\limits_{n\to +\infty} \frac{G_\lambda(2-n)}{G_\lambda(-n)} \right)^{m-1}                                                                                                                                   \\
		                                               & = \frac{G_\lambda^{m-1}(2)-G_\lambda^{m-1}(-2)}{G_\lambda(2)-G_\lambda(-2)}\cdot \frac{2^{m-2}}{m-1} \cdot \erm^{(m-1)\lambda}.
	\end{align*}

	Finally, for $\CMRR_\lambda$, consider the following profile (let $m=2k+1$):
	\begin{itemize}
		\item $k+1$ voters: $a_1 \succ a_2 \succ \cdots \succ a_m$;
		\item $k$ voters: $a_m \succ' a_{m-1} \succ' \cdots \succ' a_1$.
	\end{itemize}

	And its neighboring profile $P'$:
	\begin{itemize}
		\item $k$ voters: $a_1 \succ a_2 \succ \cdots \succ a_m$;
		\item $k+1$ voters: $a_m \succ' a_{m-1} \succ' \cdots \succ' a_1$.
	\end{itemize}

	Then the majority margin of $P$ and $P'$ should be

	\begin{align*}
		w_P[a_i,a_j] = \begin{cases}
			1,  & i < j, \\
			-1, & i > j.
		\end{cases} \qquad w_{P'}[a_i,a_j] = \begin{cases}
			1,  & i < j, \\
			-1, & i > j.
		\end{cases}
	\end{align*}

	Further, we have

	\begin{align*}
		\frac{\p[\CMRand(P')=a_m]}{\p[\CMRand(P)=a_m]} & = \frac{\prod\limits_{j\neq i} \p[U_{\lambda,P}^\texttt{Rand}[a_i,a_j]=1]}{\prod\limits_{j\neq i} \p[U_{\lambda,P'}^\texttt{Rand}[a_i,a_j]=1]} \cdot \frac{\sum\limits_{i=1}^m \prod\limits_{j\neq i} \p[U_{\lambda,P'}^\texttt{Rand}[a_i,a_j]=1]}{\sum\limits_{i=1}^m \prod\limits_{j\neq i} \p[U_{\lambda,P}^\texttt{Rand}[a_i,a_j]=1]} = \erm^{(m-1)\lambda},
	\end{align*}
	which completes the proof.
\end{proof}

With Theorem \ref{thm: dp}, we can get the upper and lower bounds of noise level $\lambda$ for any given $\epsilon$. The relations between the lower and upper bounds when $m=5$ and $m=20$ are shown in Figure \ref{fig: DP-value}.

\begin{figure}[H]
	\centering
	\includegraphics[width=.45\linewidth]{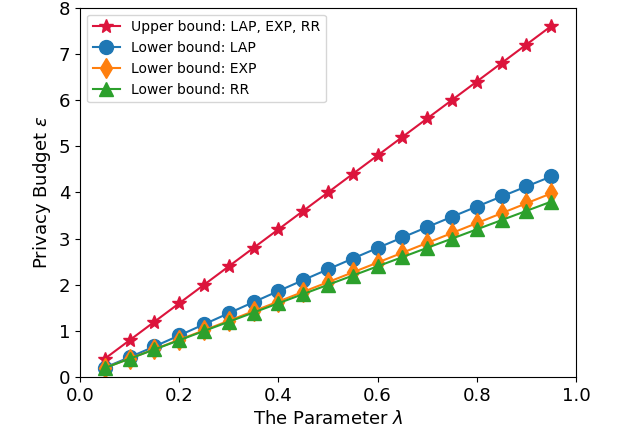}
	\includegraphics[width=.45\linewidth]{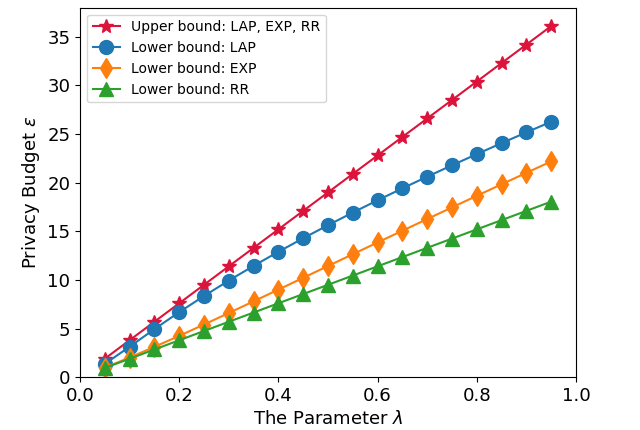}
	\caption{The lower and upper bounds of privacy budget (left: $m=5$, right: $m=20$).}
	\label{fig: DP-value}
\end{figure}

\section{Axioms-Privacy Tradeoff}
\label{sec: axioms}

In this section, we analyze our voting rules with axioms mentioned in Section~\ref{sec: pre}. We show that our rules do not satisfy Condorcet criterion and Pareto efficiency. To address these challenges, we explore probabilistic variants of them. Then we discuss the satisfaction to absolute monotonicity, SD-strategyproofness, and lexi-participation.

To begin with, we analyze our voting rules with Condorcet criterion. But unfortunately, $\CMRand_\lambda$ does not satisfy Condorcet criterion with any $\lambda$ and $\texttt{Rand}\in\RandSet$, though it is based on the Condorcet method. Intuitively, for any $P\in\LA^n$ and $a,b\in A$, $\p[U^\texttt{Rand}_{\lambda,P}[a,b]=1]<1$. Then
\begin{align*}
	\p[\CMRand_\lambda(P)=a]\leqslant \prod_{b\in A\backslash\{a\}} \p[U^\texttt{Rand}_{\lambda,P}[a,b]=1]<1,
\end{align*}
even when $a$ is the Condorcet winner. To deal with this, we propose a probabilistic variant of Condorcet criterion, which is shown in the following definition.

\begin{definition}[\bf Probabilistic Condorcet criterion]
	\label{def: p-cond}
	A randomized voting rule $r$ satisfies probabilistic Condorcet criterion (p-Condorcet) if for every profile $P$ that $\CW(P)$ exists and all $a\in A\backslash\{ \CW (P)\}$,
	\begin{align*}
		\p[r(P)=\CW (P)] \geqslant \p[r(P)=a].
	\end{align*}
\end{definition}

At a high level, Definition \ref{def: p-cond} is a relaxation of the Condorcet criterion, since it does not always require the voting rule $r$ to select the Condorcet winner. Further, the following theorem holds. The proof can be found in the full version.

\begin{theorem}
	\label{thm: sat-pCond}
	For any $\lambda>0$, $\CMRR_\lambda$ satisfies p-Condorcet.
\end{theorem}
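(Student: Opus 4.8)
The plan is to reduce everything to the closed-form sampling formula for $\CMRR_\lambda$ given in Lemma~\ref{lem: mech2scf} and then argue by a simple out-degree comparison in the unweighted majority graph. Recall from that lemma that, writing $B(a)=\{b\in A: S_P[a,b]>S_P[b,a]\}$ (equivalently $w_P[a,b]>0$, equivalently $U_P[a,b]=1$, so $|B(a)|$ is exactly the out-degree of $a$ in the UMG of $P$), we have
\begin{align*}
	\p[\CMRR_\lambda(P)=a] = \frac{\erm^{\lambda|B(a)|}}{\sum_{c\in A}\erm^{\lambda|B(c)|}},
\end{align*}
since the common factor $(1+\erm^\lambda)^{-(m-1)}$ cancels against the normalizing constant. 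The crucial structural observation is that the numerator depends on $a$ only through $|B(a)|$ and the denominator does not depend on $a$ at all; hence for $\lambda>0$ the map $a\mapsto \p[\CMRR_\lambda(P)=a]$ is strictly increasing in $|B(a)|$. So it suffices to show that the Condorcet winner maximizes the UMG out-degree, strictly.

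First I would handle the Condorcet winner. If $a^\ast=\CW(P)$ exists, then by definition $U_P[a^\ast,b]=1$ for all $b\in A\setminus\{a^\ast\}$, so $B(a^\ast)=A\setminus\{a^\ast\}$ and $|B(a^\ast)|=m-1$, the largest value $|B(\cdot)|$ can take. Next I would bound every other alternative: fix $a\neq a^\ast$. Since $a^\ast$ beats $a$ we have $U_P[a^\ast,a]=1$, hence $U_P[a,a^\ast]\neq 1$, i.e.\ $a^\ast\notin B(a)$. Therefore $B(a)\subseteq A\setminus\{a,a^\ast\}$ and $|B(a)|\le m-2<m-1=|B(a^\ast)|$. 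Plugging these into the displayed formula gives $\p[\CMRR_\lambda(P)=\CW(P)]>\p[\CMRR_\lambda(P)=a]$ for every $a\neq\CW(P)$, which is even stronger than the weak inequality required by Definition~\ref{def: p-cond}.

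I do not anticipate a genuine obstacle here; the only points worth stating carefully are (i) that the normalization in Lemma~\ref{lem: mech2scf} is identical for all alternatives so that comparing winning probabilities is the same as comparing $|B(a)|$, and (ii) that ties among the non-winners (pairs with $w_P[a,b]=0$) are irrelevant, because $B(\cdot)$ is defined via the strict inequality $S_P[a,b]>S_P[b,a]$, so such pairs simply fail to contribute to any out-degree and cannot push a non-winner's out-degree up to $m-1$. One could also remark that the same argument shows $\CMRR_\lambda$ in fact satisfies the stronger ``probabilistic Condorcet criterion with strict inequality,'' though the statement as phrased only asks for $\ge$.
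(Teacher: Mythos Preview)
Your proposal is correct and follows essentially the same approach as the paper: both proofs invoke the closed-form sampling formula from Lemma~\ref{lem: mech2scf} for $\CMRR_\lambda$ and then compare out-degrees in the UMG, noting that the Condorcet winner has out-degree $m-1$ while every other alternative has out-degree at most $m-2$. Your write-up is in fact slightly more careful than the paper's about the normalization and the handling of ties, but the underlying argument is identical.
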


\begin{proof}
	Since there are only one profile throughout the proof, we omit the normalization here, i.e., for profile $P$, supposing $\CW (P)=a$, we have
	$$\p[\CMRR_\lambda(P)=a] = \left( \frac{\erm^\lambda}{\erm^\lambda+1} \right)^{m-1}.$$
	For any $b\in A \backslash \{a\}$, since $b$ is not the Condorcet winner, there exists a nonempty set $B\subseteq A$, such that $C[c,b]=1$ if and only if $c\in B$. Hence,
	\begin{align*}
		\p[\CMRR_\lambda(P)=b] = \frac{\erm^{(m-|B|-1)\lambda}}{(\erm^\lambda+1)^{m-1}} \leqslant \p[\CMRR_\lambda(P)=a],
	\end{align*}
	which completes the proof.
\end{proof}

In other words, $\CMRR_\lambda$ satisfies a weak version of Condorcet criterion for any $\lambda>0$. However, the results for $\CMEXP_\lambda$ and $\CMLAP_\lambda$ are relatively negative.

\begin{proposition}
	\label{prop: unsat-pCond}
	$\CMEXP_{0.5}$ and $\CMLAP_{0.5}$ do not satisfy p-Condorcet.
\end{proposition}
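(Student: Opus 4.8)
The plan is to refute p-Condorcet by a single explicit counterexample that works for both mechanisms at once. By Lemma~\ref{lem: mech2scf}, the winning probabilities of $\CMEXP_\lambda$ and $\CMLAP_\lambda$ depend on the profile only through the majority margins $w_P[a,b]$, and crucially they \emph{reward the size} of a margin, not merely its sign (this is exactly what distinguishes them from $\CMRR_\lambda$, which satisfies p-Condorcet). So the strategy is to build a profile whose Condorcet winner wins every head-to-head contest by the smallest possible margin, while some other alternative wins two of its contests by a large margin and loses only one of them (to the Condorcet winner) by a small margin; then that alternative's two strong wins outweigh its single weak loss.

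Concretely, take $m=4$ and the profile $P$ on $\{a_1,a_2,a_3,a_4\}$ with five voters: three copies of $a_1\succ a_2\succ a_3\succ a_4$, one copy of $a_2\succ a_4\succ a_3\succ a_1$, and one copy of $a_4\succ a_3\succ a_2\succ a_1$. A routine count gives $w_P[a_1,a_j]=1$ for $j\in\{2,3,4\}$, $w_P[a_2,a_3]=w_P[a_2,a_4]=3$, and $w_P[a_3,a_4]=1$, so $a_1=\CW(P)$. Substituting these margins into Lemma~\ref{lem: mech2scf} with $\lambda=0.5$, the inequality $\p[\CMEXP_{0.5}(P)=a_2]>\p[\CMEXP_{0.5}(P)=a_1]$ becomes
\[
\frac{1}{1+\erm^{1/4}}\cdot\frac{1}{(1+\erm^{-3/4})^{2}} \;>\; \frac{1}{(1+\erm^{-1/4})^{3}},
\]
and, writing $F_\lambda$ for the c.d.f.\ appearing in the proof of Lemma~\ref{lem: mech2scf} (so $F_\lambda(x)=\tfrac12+\sgn(x)\,(\tfrac12-\tfrac{2+\lambda|x|}{4}\erm^{-\lambda|x|})$, whence $F_\lambda(-1)=1-F_\lambda(1)$), the inequality $\p[\CMLAP_{0.5}(P)=a_2]>\p[\CMLAP_{0.5}(P)=a_1]$ becomes $F_{0.5}(-1)\,F_{0.5}(3)^{2}>F_{0.5}(1)^{3}$. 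Both hold numerically (roughly $0.202>0.178$ and $0.246>0.239$), so in each case $a_2$ strictly beats the Condorcet winner $a_1$ in winning probability — exactly the negation of p-Condorcet — which finishes the proof.

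The step I expect to be delicate is calibrating the counterexample for $\CMLAP_{0.5}$: that inequality is tight, so one must push the Condorcet winner's margins all the way down to $1$ and take $m$ large enough — with $m=3$, or with $a_1$ winning by margin $2$, the LAP computation actually fails at $\lambda=0.5$. The $\CMEXP_{0.5}$ side has comfortable slack (a $3$-alternative profile already suffices for it), so the only real content is checking that the chosen $4$-alternative profile simultaneously clears the narrow LAP gap; the margin computations for the explicit $P$ above are then a finite, mechanical verification.
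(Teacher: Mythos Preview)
Your strategy --- make the Condorcet winner win every pairwise contest by margin $1$ while a competitor wins its remaining contests by a large margin --- is exactly the paper's approach; the paper carries it out with $m=5$, $n=101$ and margins $1$ versus $101$. For $\CMEXP_{0.5}$ your $m=4$, $n=5$ profile is a valid (and more economical) witness.

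For $\CMLAP_{0.5}$, however, there is a genuine gap. The formula $F_\lambda(x)=\tfrac12+\sgn(x)\bigl(\tfrac12-\tfrac{2+\lambda|x|}{4}\erm^{-\lambda|x|}\bigr)$ that you lifted from the \emph{proof} of Lemma~\ref{lem: mech2scf} is the CDF of a difference of two independent $\lap(1/\lambda)$ variables. But by Definition~\ref{def: 3random} only one noise $X_{ij}$ is added per unordered pair (with $\hat w_P[a_j,a_i]=-\hat w_P[a_i,a_j]$), and the \emph{statement} of Lemma~\ref{lem: mech2scf} explicitly takes $F_\lambda$ to be the CDF of $\lap(1/\lambda)$ itself, i.e.\ $F_\lambda(x)=1-\tfrac12\erm^{-\lambda x}$ for $x\ge 0$. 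The paper's own proof of this proposition uses that ordinary Laplace CDF: its reported values $0.2357$ and $0.3033$ are $F_{0.5}(1)^4$ and $F_{0.5}(-1)\cdot F_{0.5}(101)^3$ under the single-noise CDF, and do \emph{not} match your two-noise formula. With the correct CDF one has $F_{0.5}(1)\approx 0.6967$ and $F_{0.5}(-1)\approx 0.3033$, so on your profile
\[
\p[\CMLAP_{0.5}(P)=a_1]\propto F_{0.5}(1)^3\approx 0.338,\qquad
\p[\CMLAP_{0.5}(P)=a_2]\propto F_{0.5}(-1)\,F_{0.5}(3)^2\approx 0.239,
\]
and the inequality goes the wrong way. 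Worse, no $m=4$ profile of this shape can work: even sending $a_2$'s two non-$a_1$ margins to infinity yields at most $F_{0.5}(-1)\approx 0.3033$, which is still below $F_{0.5}(1)^3\approx 0.338$. One needs $m\ge 5$ (where $F_{0.5}(1)^4\approx 0.236<0.303$), which is precisely why the paper's counterexample takes five alternatives.
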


\begin{proof}
	For $\CMLAP_{0.5}$, consider the voting instance where $n=101$ voters and $m=5$ alternatives. The ballots are as follows
	\begin{itemize}
		\item $51$ voters: $a_1\succ a_2\succ a_3\succ a_4\succ a_5$;
		\item $50$ voters: $a_2\succ a_3\succ a_4\succ a_5\succ a_1$;
	\end{itemize}
	Then we have
	\begin{equation*}
		\begin{aligned}
			S_P[a_1,a_i]=51, S_P[a_i,a_1]=50, & \text{ for all } a_i\in A\backslash\{a_1\},     \\
			S_P[a_2,a_i]=101, S_P[a_i,a_2]=0, & \text{ for all } a_i\in A\backslash\{a_1,a_2\}.
		\end{aligned}
	\end{equation*}
	In other words,
	\begin{align*}
		w_P[a_1,a_i]=1,   & \text{ for all } a_i\in A\backslash\{a_1\},     \\
		w_P[a_2,a_i]=101, & \text{ for all } a_i\in A\backslash\{a_1,a_2\}.
	\end{align*}
	As a result, $\CW (P)=a_1$. Since $\lambda=1$, we have
	\begin{align*}
		  & \p[ \CMLAP_{0.5}(P)=a_1 ]  =\prod_{i=2}^5 \CDF(1) = \frac{1}{16}\left( 2-\frac{1}{\sqrt{\erm}}\right)^4 \approx 0.2357.
	\end{align*}
	However,
	\begin{align*}
		  & \p[ \CMLAP_{0.5}(P)=a_2 ]  = \CDF(-1)\cdot \CDF^3(101) = \frac{1}{16\sqrt{\erm}}\left( 2-\frac{1}{\erm^{101/2}} \right)^3 \approx 0.3033.
	\end{align*}
	In other words,
	\begin{align*}
		\p[ \CMLAP_{0.5}(P)=a_1 ] < \p[ \CMLAP_{0.5}(P)=a_2 ],
	\end{align*}
	which indicates that $\CMLAP_0.5$ does not satisfy p-Condorcet.

	For $\CMEXP_{0.5}$, let $n=101$, $m=5$, and the profile $P$ be
	\begin{itemize}
		\item $51$ voters: $a_1\succ a_2\succ a_3\succ a_4\succ a_5$;
		\item $50$ voters: $a_2\succ a_3\succ a_4\succ a_5\succ a_1$.
	\end{itemize}
	Then we have
	\begin{equation*}
		\begin{aligned}
			  & S_P[a_1,a_i]=51,  &   & S_P[a_i,a_1]=50, &   & \text{ for all } a_i\in A\backslash\{a_1\},     \\
			  & S_P[a_2,a_i]=101, &   & S_P[a_i,a_2]=0,  &   & \text{ for all } a_i\in A\backslash\{a_1,a_2\}.
		\end{aligned}
	\end{equation*}
	In other words,
	\begin{equation*}
		\begin{aligned}
			  & w_P[a_1,a_i]=1,   &   & \text{ for all } a_i\in A\backslash\{a_1\},     \\
			  & w_P[a_2,a_i]=101, &   & \text{ for all } a_i\in A\backslash\{a_1,a_2\}.
		\end{aligned}
	\end{equation*}
	As a result, $\CW (P)=a_1$ and
	\begin{align*}
		  & \p[ \CMEXP_{0.5}(P)=a_1 ]  =\prod_{i=2}^5\p[ U_{\lambda,P}^\text{\rm EXP}[a_1,a_i]=1 ]         \\
		= & \prod_{i=2}^5 \frac{1}{1+\erm^{-w_P[a_1,a_i]/4}} = \frac{1}{(1+\erm^{-1/4})^4} \approx 0.0999.
	\end{align*}
	However,
	\begin{align*}
		  & \p[ \CMEXP_{0.5}(P)=a_2 ]  =\prod_{i\neq 2}\p[ U_{\lambda,P}^\text{\rm EXP}[a_2,a_i]=1 ] = \frac{1}{(1+\erm^{1/4})(1+\erm^{-101/4})^3} \approx 0.4378.
	\end{align*}
	As a consequence,
	\begin{align*}
		\p[ \CMEXP_{0.5}(P)=a_1 ] < \p[ \CMEXP_{0.5}(P)=a_2 ],
	\end{align*}
	which completes the proof.
\end{proof}

To measure how much $\CMEXP_\lambda$ and $\CMLAP_\lambda$ deviate from p-Condorcet, we further extend the axiom.

\begin{definition}[\bf \boldmath $\alpha$-Probabilistic Condorcet criterion]
	\label{def: alpha-pCond}
	A randomized voting rule $r$ satisfies $\alpha$-probabilistic Condorcet criterion ($\alpha$-p-Condorcet) if for every profile $P$ that $\CW(P)$ exists and for all $a\in A\backslash \{\CW(P)\}$,
	$$\p[r(P)=\CW (P)]\geqslant \alpha \cdot\p[r(P)=a].$$
\end{definition}

Note that a larger $\alpha$ is more desirable, as $\alpha$-p-Condorcet is almost equivalent to the standard Condorcet criterion when $\alpha\to \infty$. Especially, it reduces to p-Condorcet when $\alpha=1$.

For $\CMEXP_\lambda$ and $\CMLAP_\lambda$, the following theorem holds.

\begin{theorem}
	\label{thm: sat-asympCond}
	For any $\lambda>0$,
	\begin{itemize}
		\item $\CMEXP_\lambda$ satisfies $\frac{ 1+\erm^{\lambda/2} }{ \left( 1+\erm^{-\lambda/2} \right)^{m-1} }$-p-Condorcet;
		\item $\CMLAP_\lambda$ satisfies $2\erm^{\lambda}\left(1-\frac{\erm^{-\lambda}}{2}\right)^{m-1}$-p-Condorcet;
		\item $\CMRR_\lambda$ satisfies $\erm^{\lambda}$-p-Condorcet.
	\end{itemize}
\end{theorem}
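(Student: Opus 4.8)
The plan is to reduce all three statements to the single inequality $W(c)/W(a)\ge\alpha$. Here $c:=\CW(P)$, $a\in A\setminus\{c\}$ is arbitrary, $\alpha$ is the constant claimed for the mechanism at hand, and $W(x):=\prod_{b\neq x}G_\lambda(w_P[x,b])$ is the probability that $x$ is the Condorcet winner of a single draw of the noisy graph. By the conditioning argument already used to prove Lemma~\ref{lem: mech2scf} we have $\p[r(P)=x]=W(x)\big/\sum_{y\in A}W(y)$, so the normalizer cancels in the ratio $\p[r(P)=c]/\p[r(P)=a]$ and it suffices to bound $W(c)/W(a)$ from below. In each of the three mechanisms $G_\lambda$ is nondecreasing with values in $(0,1]$. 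The only inputs from the hypothesis are: $c$ beats every $b\neq c$, so $w_P[c,b]\ge 1$ (a positive integer); and $a$ does not beat $c$, so $w_P[a,c]\le -1$.

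Now I would run the same chain of estimates uniformly. From $w_P[c,b]\ge 1$ and monotonicity of $G_\lambda$, every factor of $W(c)$ is at least $G_\lambda(1)$, so $W(c)\ge G_\lambda(1)^{m-1}$. In $W(a)$, isolate the factor indexed by $b=c$: it equals $G_\lambda(w_P[a,c])\le G_\lambda(-1)$ by monotonicity, while each of the remaining $m-2$ factors is at most $g_\lambda:=\sup_t G_\lambda(t)$. Hence
\[
\frac{W(c)}{W(a)}\ \ge\ \frac{G_\lambda(1)^{m-1}}{G_\lambda(-1)\cdot g_\lambda^{m-2}}.
\]
It remains to substitute the mechanism-specific quantities. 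For $\CMEXP_\lambda$, $G_\lambda(t)=(1+\erm^{-\lambda t/2})^{-1}$, so $g_\lambda=1$, $G_\lambda(1)=(1+\erm^{-\lambda/2})^{-1}$, $G_\lambda(-1)=(1+\erm^{\lambda/2})^{-1}$, and the bound becomes $\frac{1+\erm^{\lambda/2}}{(1+\erm^{-\lambda/2})^{m-1}}$. For $\CMLAP_\lambda$, $G_\lambda$ is the Laplace CDF, so again $g_\lambda=1$, while $G_\lambda(1)=1-\tfrac12\erm^{-\lambda}$ and $G_\lambda(-1)=\tfrac12\erm^{-\lambda}$, giving $2\erm^\lambda\!\left(1-\tfrac{\erm^{-\lambda}}{2}\right)^{m-1}$. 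For $\CMRR_\lambda$, $g_\lambda=G_\lambda(1)=\frac{\erm^\lambda}{1+\erm^\lambda}$, so the $g_\lambda^{m-2}$ in the denominator absorbs all but one power of $G_\lambda(1)$ and leaves $G_\lambda(1)/G_\lambda(-1)=\erm^\lambda$; equivalently, via the closed form $W(x)\propto\erm^{\lambda|B(x)|}$ one has $|B(c)|=m-1$ and $|B(a)|\le m-2$ (because $c\notin B(a)$), so $W(c)/W(a)=\erm^{\lambda(m-1-|B(a)|)}\ge\erm^\lambda$.

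The argument is short, and I do not expect a deep obstacle; the points that need care are essentially bookkeeping. First, one must get the direction of monotonicity of $G_\lambda$ right in each case --- for $\CMEXP_\lambda$ it is monotonicity of $t\mapsto(1+\erm^{-\lambda t/2})^{-1}$, for $\CMLAP_\lambda$ it is that a CDF is nondecreasing, and for $\CMRR_\lambda$ it is the ordering $\tfrac{1}{1+\erm^\lambda}<\tfrac12<\tfrac{\erm^\lambda}{1+\erm^\lambda}$. Second, one should use integrality of the majority margins so that $w_P[c,b]\ge 1$ and $w_P[a,c]\le -1$, not merely $>0$ and $<0$. The one conceptual point worth flagging is that the crude bound ``each of the remaining $m-2$ factors of $W(a)$ is at most $g_\lambda$'' is already enough to hit the claimed constants exactly, so there is no need to optimize jointly over those $m-2$ pairs --- a simplification one might not expect a priori, and which also explains why the $\CMRR_\lambda$ bound (where $g_\lambda<1$) comes out so much larger than the $\CMLAP_\lambda$ and $\CMEXP_\lambda$ bounds (where $g_\lambda=1$).
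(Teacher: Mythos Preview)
Your proposal is correct and follows essentially the same approach as the paper: cancel the normalizer, lower-bound the Condorcet winner's unnormalized weight by $G_\lambda(1)^{m-1}$, and upper-bound the non-winner's weight by isolating the factor against the Condorcet winner as $G_\lambda(-1)$ and crudely bounding the remaining $m-2$ factors by their supremum. The only difference is presentational---you package the three cases into a single inequality with $G_\lambda$ and $g_\lambda$, whereas the paper treats $\CMEXP_\lambda$, $\CMLAP_\lambda$, and $\CMRR_\lambda$ separately with the identical logic.
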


\begin{proof}
	Since there is only one profile $P$, the normalization factors in Lemma \ref{lem: mech2scf} are not necessarily considered anymore. In other words, we suppose for the sake of simplicity that
	\begin{align*}
		\p[\CMRand_\lambda(P)=a]=\prod\limits_{b\in A\backslash \{a\}} \p[U_{\lambda,P}^\texttt{Rand}[a,b]=1].
	\end{align*}

	First, we prove for $\CMEXP_\lambda$. For any $P$, letting $\CW (P)=a$, we have
	\begin{align*}
		\p[\CMEXP_\lambda(P)=a] = \prod_{c\in A\backslash \{a\}} \frac{1}{1+\erm^{\lambda\cdot w_P[c,a] /2 }}.
	\end{align*}
	Note that $\CW (P)=a$. Then $ w_P[a,c]\geqslant 1$ holds for any $c\in A\backslash\{a\}$. Hence
	\begin{align*}
		\p[\CMEXP_\lambda(P)=a] & \geqslant \prod_{c\in A\backslash \{a\}} \frac{1}{1+\erm^{ -\lambda/2 }} = \frac{1}{(1+\erm^{-\lambda/2})^{m-1}}.
	\end{align*}
	On the other hand, for any $b\in A\backslash\{a\}$, we have
	\begin{align*}
		    & \p[\CMEXP_\lambda(P)=b] = \p[U_{\lambda,P}^\text{\rm EXP}[b,a]=1] \cdot \prod_{ c\in A\backslash\{a,b\} } \p[U_{\lambda,P}^\text{\rm EXP}[b,c]=1] \\
		=\; & \frac{1}{1+\erm^{\lambda\cdot w_P[a,b] /2}}\cdot \prod_{ c\in A\backslash\{a,b\} } \frac{1}{1+\erm^{\lambda\cdot w_P[c,b]/2 }}.
	\end{align*}
	Since $b$ is not the Condorcet winner, we have $ w_P[a,b]\geqslant 1$ and $ w_P[c,b]\geqslant -n$ holds for any $a_k\in A\backslash \{a,b\}$. Therefore,
	\begin{align*}
		\p[\CMEXP_\lambda(P)=a] & \leqslant \frac{1}{1+\erm^{ \lambda/2 }} \cdot \left( \frac{1}{1+\erm^{-n\lambda/2}} \right)^{m-2} \leqslant \frac{1}{1+\erm^{ \lambda/2 }}.
	\end{align*}
	Then, for every $b\neq a$, we have
	\begin{align*}
		\frac{ \p[\CMEXP_\lambda(P)=a] }{ \p[\CMEXP_\lambda(P)=b] } \geqslant \frac{ 1+\erm^{\lambda/2} }{ \left( 1+\erm^{-\lambda/2} \right)^{m-1} }.
	\end{align*}
	That is, $\CMEXP_\lambda$ satisfies $\frac{ 1+\erm^{\lambda/2} }{ \left( 1+\erm^{-\lambda/2} \right)^{m-1} }$-p-Condorcet.

	Next, we prove for $\CMLAP_\lambda$. For any profile $P\in \LA^n$, supposing that $\CW (P)=a$, we have
	\begin{align*}
		\p[\CMLAP_\lambda(P)=a] & = \prod_{c\in A\backslash \{a\}} \p[\hat{w}_P[a,c]>0] = \prod_{c\in A\backslash \{a\}} \CDF(w_P[a,c]).
	\end{align*}
	Since $\CW (P)=a$, $S[a,c]\geqslant 1$ for any $c\in A\backslash\{a\}$. Then we have
	\begin{align*}
		\p[\CMLAP_\lambda(P)=a] & \geqslant \prod_{c\in A\backslash \{a\}} \CDF(1) = \left( 1-\frac{\erm^{-\lambda}}{2} \right)^{m-1}.
	\end{align*}
	On the other hand, for any $b\in A\backslash \{a\}$, we have
	\begin{align*}
		\p[\CMLAP_\lambda(P)=b] & = \prod_{c\in A\backslash \{a\}} \CDF(w_P[a,c]).
	\end{align*}
	Since the Condorcet winner of $P$ is $a
		_i$, $ w_P[b,a]\leqslant -1$ and $ w_P[b,c]\leqslant n$ holds for any $a_k\in A\backslash \{a,b\}$. Therefore,
	\begin{align*}
		  & \p[\CMLAP_\lambda(P)=b]
		\leqslant \CDF(-1) \CDF^{m-2}(n) \leqslant \frac{\erm^{-\lambda}}{2}.
	\end{align*}
	As a result,
	\begin{align*}
		\frac{ \p[\CMLAP_\lambda(P)=a] }{ \p[\CMLAP_\lambda(P)=b] } \geqslant 2\erm^{\lambda}\left(1-\frac{\erm^{-\lambda}}{2}\right)^{m-1}, \text{ for any } b\in A\backslash \{a\},
	\end{align*}
	which means $\CMLAP_\lambda$ satisfies $2\erm^{\lambda}\left(1-\frac{\erm^{-\lambda}}{2}\right)^{m-1}$-p-Condorcet.

	Finally, for $\CMRR_\lambda$ and any profile $P$, let $\CW(P)=a$. Then, for any $b\neq a$,
	\begin{align*}
		\frac{\p[\CMRR_\lambda(P)=a]}{\p[\CMRR_\lambda(P)=b]} = \frac{ \erm^{\lambda \cdot |\{c\in A: w_P[a,c]>0\}|} }{ \erm^{\lambda \cdot |\{c\in A: w_P[b,c]>0\}|} } = \frac{\erm^{(m-1)\lambda}}{\erm^{\lambda \cdot |\{c\in A: w_P[b,c]>0\}|}}.
	\end{align*}
	Since $b$ is not the Condorcet winner, we have $|\{c\in A: w_P[b,c]>0\}|\leqslant m-2$. Then
	\begin{align*}
		\frac{\p[\CMRR_\lambda(P)=a]}{\p[\CMRR_\lambda(P)=b]} \geqslant \frac{\erm^{(m-1)\lambda}}{\erm^{(m-2)\lambda}} = \erm^{\lambda},
	\end{align*}
	which completes the proof.
\end{proof}

With Theorem \ref{thm: sat-asympCond}, we can obtain a more general version of Proposition \ref{prop: unsat-pCond}, of which the proof is shown in the full version.

\begin{proposition}
	\label{prop: lambda-pCond}
	Given $\lambda>0$, $\CMEXP_\lambda$ satisfies p-Condorcet when $\frac{\ln(\erm^{\lambda/2}+1)}{\ln(\erm^{\lambda/2}+1)-\lambda/2}+1 \geqslant m$; $\CMLAP_\lambda$ satisfies p-Condorcet when $\frac{ \lambda+\ln 2 }{ \ln 2-\ln(2-\erm^{-\lambda}) }+1\geqslant m$.
\end{proposition}

\begin{proof}
	Let $P$ be a profile, of which the Condorcet winner $\CW(P)$ exists. First, we prove for $\CMEXP_\lambda$. Since $m\leqslant \frac{\ln(\erm^{\lambda/2}+1)}{\ln(\erm^{\lambda/2}+1)-\lambda/2}+1$, we have

	\begin{align*}
		m-1\leqslant \frac{\ln(\erm^{\lambda/2}+1)}{\ln(\erm^{\lambda/2}+1)-\lambda/2} = \log_{1+\erm^{-\lambda/2}} \left(1+\erm^{\lambda/2}\right).
	\end{align*}
	Therefore, for all $a\neq \CW(P)$,
	\begin{align*}
		\p[\CMEXP_\lambda(P)=\CW(P)] & = \prod\limits_{b\neq \CW(P)}\frac{1}{1+\erm^{-\lambda\cdot w_P[\CW(P), b]/2}} \geqslant \left( \frac{1}{1+\erm^{-\lambda/2}} \right)^{m-1}  \\
		                             & \geqslant \frac{1}{1+\erm^{\lambda/2}} \geqslant \prod\limits_{j\neq i} \frac{1}{1+\erm^{-\lambda\cdot w_P[a,b]/2}}=\p[\CMEXP_\lambda(P)=a],
	\end{align*}
	which means that $\CMEXP_\lambda$ satisfies p-Condorcet.

	Next, we prove for $\CMLAP_\lambda$. When $m\leqslant \frac{ \lambda+\ln 2 }{ \ln 2-\ln(2-\erm^{-\lambda}) }+1$, we have

	\begin{align*}
		m-1\leqslant \frac{ \lambda+\ln 2 }{ \ln 2-\ln(2-\erm^{-\lambda}) } = \log_{1-\erm^{-\lambda}/2} \left(\frac{\erm^{-\lambda}}{2}\right),
	\end{align*}
	which indicates that for any $a\neq \CW(P)$,
	\begin{align*}
		\frac{\p[\CMLAP_\lambda(P)=\CW(P)]}{\p[\CMLAP_\lambda(P)=a]} \geqslant 2\erm^{\lambda}\left(1-\frac{\erm^{-\lambda}}{2}\right)^{m-1} \geqslant 1.
	\end{align*}
	In other words, $\CMLAP_\lambda$ satisfies p-Condorcet in this case, which completes the proof.
\end{proof}

Notice that both of the LHS of the two inequalities in Proposition \ref{prop: lambda-pCond} are increasing functions of $\lambda$ which diverge when $\lambda\to \infty$. Thus, for any $m$, there must exist some $\lambda$ satisfying the inequalities.

Since the upper and lower bounds of the privacy budget can completely be determined by $\lambda$, we use $\lambda$ to denote the privacy level. Also, we use the parameter $\alpha$ in Definition \ref{def: p-cond} to denote the level of satisfaction to p-Condorcet, then the tradeoff curves when $m=5$ are shown in Figure \ref{fig: pCond-eps}.

\begin{figure}[h]
	\centering
	\includegraphics[width=.5\linewidth]{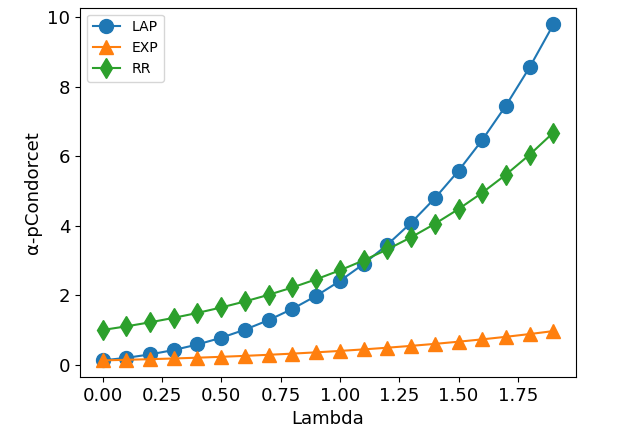}
	\caption{The satisfaction of p-Condorcet with different $\lambda$.}
	\label{fig: pCond-eps}
\end{figure}

Similar to the Condorcet criterion, our new class of voting rules do not satisfy Pareto efficiency either. Suppose there is an alternative $b\in A$, which is Pareto dominated by $a\in A$ in profile $P$, i.e., $a\succ_j b$ for all $j\in N$. Then for any $\lambda$ and $\texttt{Rand}\in\RandSet$, we have
\begin{align*}
	\p[\CMRand_\lambda(P)=b] \leqslant \prod\limits_{c\neq b}\p[U^\texttt{Rand}_{\lambda,P}[b,c]=1].
\end{align*}
According to the Definition of $\text{LAP}$, $\text{EXP}$, and $\text{RR}$, $\p[U^\texttt{Rand}_{\lambda,P}(b,c)]>0$ for all $c\in A$, which indicates that $\CMRand_\lambda$ does not satisfy Pareto efficiency. However, $a$ still dominates $b$ in another way. Formally, we have the following definition.

\begin{definition}[\bf \boldmath Probabilistic Pareto efficiency]
	\label{def: p-PO}
	A randomized voting rule $r$ satisfies probabilistic Pareto efficiency (p-Pareto) if for each pair of alternatives $a,b\in A$ that $a\succ_k b$ holds for all $k\in N$,
	$$\p[r(P)=a] \geqslant \p[r(P)=b].$$
\end{definition}

This definition is a relaxation of Pareto efficiency. For our voting rules, the following theorem holds.

\begin{theorem}
	\label{thm: sat-ppareto}
	For any $\lambda>0$, $\CMRR_\lambda$, $\CMEXP_\lambda$, and $\CMLAP_\lambda$ satisfy p-Pareto.
\end{theorem}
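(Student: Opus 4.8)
The plan is to reduce everything to Lemma~\ref{lem: mech2scf}, which gives closed forms for the winning probabilities up to a normalization constant that depends only on $P$ and not on the alternative; so for a fixed profile $P$ it suffices to compare the unnormalized products. Fix $a,b\in A$ with $a\succ_k b$ for all $k\in N$ (the case $n=0$ is vacuous, so assume $n\ge 1$).

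First I would record the elementary monotonicity facts that drive all three cases. By transitivity of each $\succ_k$, for every $c\in A\setminus\{a,b\}$ we have $\{k:b\succ_k c\}\subseteq\{k:a\succ_k c\}$ and $\{k:c\succ_k a\}\subseteq\{k:c\succ_k b\}$, hence $S_P[a,c]\ge S_P[b,c]$, $S_P[c,a]\le S_P[c,b]$, and therefore $w_P[a,c]\ge w_P[b,c]$. Moreover $w_P[a,b]=n\ge 1$ while $w_P[b,a]=-n$.

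For $\CMLAP_\lambda$ and $\CMEXP_\lambda$, let $G_\lambda$ denote the strictly increasing, strictly positive function with $\p[U^{\texttt{Rand}}_{\lambda,P}[x,y]=1]=G_\lambda(w_P[x,y])$, namely $G_\lambda=\CDF$ for LAP and $G_\lambda(t)=\frac{1}{1+\erm^{-\lambda t/2}}$ for EXP. By Lemma~\ref{lem: mech2scf} I would split $\p[\CMRand_\lambda(P)=a]\propto G_\lambda(w_P[a,b])\cdot\prod_{c\neq a,b}G_\lambda(w_P[a,c])$ and likewise for $b$. Monotonicity of $G_\lambda$ applied termwise (all factors are positive) to the inequalities $w_P[a,c]\ge w_P[b,c]$ handles the $c\neq a,b$ factors, and $w_P[a,b]=n\ge -n=w_P[b,a]$ handles the last factor; multiplying yields $\p[\CMRand_\lambda(P)=a]\ge\p[\CMRand_\lambda(P)=b]$. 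For $\CMRR_\lambda$, since the denominator $(1+\erm^\lambda)^{m-1}$ is common, it suffices to show $|B(a)|\ge|B(b)|$ with $B(x)=\{c:S_P[x,c]>S_P[c,x]\}$. I would argue $B(b)\subseteq B(a)$: if $c\in B(b)$ then $c\neq a$ (as $S_P[a,b]=n>0=S_P[b,a]$ forbids $a\in B(b)$) and $c\neq b$, so by the facts above $S_P[a,c]\ge S_P[b,c]>S_P[c,b]\ge S_P[c,a]$, i.e.\ $c\in B(a)$; and $b\in B(a)\setminus B(b)$, so $|B(a)|\ge|B(b)|+1$, giving $\p[\CMRR_\lambda(P)=a]\ge\p[\CMRR_\lambda(P)=b]$.

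I do not expect a genuine obstacle here: the whole argument is a monotonicity observation once Lemma~\ref{lem: mech2scf} is in hand. The only points requiring care are (i) correctly discarding the profile-dependent normalization constant, (ii) getting the directions of the two transitivity inclusions right so that both the termwise LAP/EXP comparison and the inclusion $B(b)\subseteq B(a)$ go through, and (iii) noting the degenerate $n=0$ case.
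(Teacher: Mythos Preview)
Your proposal is correct and follows essentially the same approach as the paper: establish the transitivity-based inequality $w_P[a,c]\ge w_P[b,c]$ for all $c\notin\{a,b\}$, then compare the unnormalized products from Lemma~\ref{lem: mech2scf} termwise via monotonicity (and via the inclusion $B(b)\subseteq B(a)$ for RR). Your version is in fact a bit tidier---you unify LAP and EXP through a single increasing $G_\lambda$, and you explicitly rule out $c=a$ when proving $B(b)\subseteq B(a)$, a step the paper glosses over.
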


\begin{proof}
	Let $a,b \in A$ be the pair of alternatives that $a\succ_j b$ holds for any $j\in N$. Then for any $j\in N$ and any $c\in A\backslash\{a,b\}$ such that $b\succ_j c$, we have $a\succ_j c$. Further, we have
	\begin{equation}
		\begin{aligned}
			S[a,c] & = |\{j\in N: a\succ_j c \}| \geqslant |\{j\in N: b\succ_j c\}| = S[b, c].
		\end{aligned}
		\label{equ: S-relation}
	\end{equation}
	Hence, $ w_P[a,c] \geqslant  w_P[b,c]$, for all $c\in A\backslash \{a,b\}$. Now, we are ready to give the proof.

	For $\CMRR_\lambda$, let $B(c)=\{b\in A: U_P[c,b]=1\}$ for every $c\in A$. Then Equation (\ref{equ: S-relation}) indicates that $B(b)\subseteq B(a)$, i.e., $|B(b)|\leqslant |B(a)|$. Hence

	\begin{align*}
		          & \p[ \CMRR_\lambda(P)=a ]=\prod_{c\in A\backslash\{a\} } \p[ U_{\lambda,P}^\text{\rm RR}[a,c]=1 ]                                                \\
		=         & \prod_{c\in B(a)} \p[ U_{\lambda,P}^\text{\rm RR}[a,c]=1 ] \cdot \prod_{c\in A\backslash B(a),c\neq a} \p[ U_{\lambda,P}^\text{\rm RR}[a,c]=1 ] \\
		=         & \left( \frac{\erm^\lambda}{1+\erm^\lambda} \right)^{|B(a)|}\cdot \left( \frac{1}{1+\erm^\lambda} \right)^{m-|B(a)|-1}                           \\
		\geqslant & \left( \frac{\erm^\lambda}{1+\erm^\lambda} \right)^{|B(b)|}\cdot \left( \frac{1}{1+\erm^\lambda} \right)^{m-|B(b)|-1}                           \\
		=         & \p[ \CMRR_\lambda(P)=b ],
	\end{align*}

	For $\CMEXP_\lambda$, we have
	\begin{align*}
		\p[ \CMEXP_\lambda(P)=a ] & =\frac{1}{1+\erm^{\lambda\cdot w_P[a,b]/2}} \cdot \prod_{c\in A\backslash\{a,b\}}\frac{1}{1+\erm^{\lambda\cdot w_P[a,c]/2}}                                      \\
		                          & \geqslant \frac{1}{1+\erm^{\lambda\cdot w_P[b,a]/2}} \cdot \prod_{c\in A\backslash\{a,b\}}\frac{1}{1+\erm^{\lambda\cdot w_P[b,c]/2}} =\p[ \CMLAP_\lambda(P)=b ],
	\end{align*}
	which means that $\CMEXP_\lambda$ satisfies p-Pareto.

	For $\CMLAP_\lambda$, we have
	\begin{align*}
		\p[ \CMLAP_\lambda(P)=a ] & =\CDF(w_P[a,b]) \cdot \prod_{c\in A\backslash\{a,b\}}\CDF(w_P[a,c])                                      \\
		                          & \geqslant  \CDF(w_P[b,a]) \cdot \prod_{c\in A\backslash\{a,b\}}\CDF(w_P[b,c])=\p[ \CMLAP_\lambda(P)=b ],
	\end{align*}
	which completes the proof.
\end{proof}

Unlike Condorcet and Pareto, the definition of monotonicity, strategyproofness, and participation are related to two distinct profiles. For monotonicity, we use the notion of absolute monotonicity in \cite{DBLP:conf/ijcai/Brandl0S18}.
Intuitively, in Mechanism \ref{algo: mech}, for any $a\in A$, whenever a voter $i$ switches $\succ_i$ to $\succ'_i$ by lifting $a$ simply, $a$ will be more likely to defeat any $b\in A$ in the one-on-one comparisons. As a consequence, $a$ will be more likely to be the winning alternative in our $\CMRand_\lambda$. Formally, the following theorem holds.

\begin{theorem}
	\label{thm: sat-pmono}
	For any $\lambda>0$, $\CMRR_\lambda$, $\CMEXP_\lambda$, and $\CMLAP_\lambda$ satisfy a-monotonicity.
\end{theorem}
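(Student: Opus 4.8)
The plan is to prove all three cases at once, using the observation (already exploited in the proof of Theorem~\ref{thm: dp}) that for each $\texttt{Rand}\in\RandSet$ the probability $\p[U_{\lambda,P}^{\texttt{Rand}}[a,b]=1]$ depends on $P$ only through the margin $w_P[a,b]$, equalling $G_\lambda(w_P[a,b])$, and that $G_\lambda$ is non-decreasing and strictly positive (it is the logistic function $1/(1+\erm^{-\lambda x/2})$ for EXP, the CDF $\CDF$ for LAP, and a non-decreasing step function for RR). Fix a voter $j$ and a pushup $\succ_j'$ of $a$ in $\succ_j$, with $P_{-j}=P'_{-j}$. The first step is to record how the pushup changes the majority margins: since $\succ_j'$ only raises $a$ and leaves the relative order of $A\backslash\{a\}$ intact, for all $b,c\in A\backslash\{a\}$ voter $j$ compares $b$ and $c$ the same way under $\succ_j$ and $\succ_j'$, hence $w_{P'}[b,c]=w_P[b,c]$; and for every $b\neq a$, $a\succ_j b$ implies $a\succ_j' b$, so $S_{P'}[a,b]\ge S_P[a,b]$ and therefore $w_{P'}[a,b]\ge w_P[a,b]$, equivalently $w_{P'}[b,a]\le w_P[b,a]$.

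Next, write $p(P)=\prod_{b\neq a}G_\lambda(w_P[a,b])$ and $R(P)=\sum_{\ell:\,a_\ell\neq a}\prod_{i\neq\ell}G_\lambda(w_P[a_\ell,a_i])$, so that by Lemma~\ref{lem: mech2scf} we have $\p[\CMRand_\lambda(P)=a]=p(P)/(p(P)+R(P))$, and likewise for $P'$. Monotonicity of $G_\lambda$ together with $w_{P'}[a,b]\ge w_P[a,b]$ gives $p(P')\ge p(P)>0$. For each fixed $a_\ell\neq a$, the product $\prod_{i\neq\ell}G_\lambda(w_P[a_\ell,a_i])$ has exactly one factor that changes when we pass to $P'$, namely $G_\lambda(w_P[a_\ell,a])$, and it does not increase because $w_{P'}[a_\ell,a]\le w_P[a_\ell,a]$ and $G_\lambda$ is non-decreasing; all other factors $G_\lambda(w_P[a_\ell,a_i])$ with $a_i\neq a,a_\ell$ are unchanged by the previous paragraph. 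Summing over $\ell$ yields $R(P')\le R(P)$.

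Finally, since $(x,s)\mapsto x/(x+s)$ is non-decreasing in $x>0$ and non-increasing in $s\ge 0$, we conclude
\[
\p[\CMRand_\lambda(P)=a]=\frac{p(P)}{p(P)+R(P)}\ \le\ \frac{p(P')}{p(P')+R(P)}\ \le\ \frac{p(P')}{p(P')+R(P')}=\p[\CMRand_\lambda(P')=a],
\]
which is exactly a-monotonicity, and since $\texttt{Rand}$ was arbitrary this covers $\CMRR_\lambda$, $\CMEXP_\lambda$, and $\CMLAP_\lambda$. I expect the only point that needs care to be the second step: checking that in each non-$a$ summand of the normalizer only the single factor comparing $a_\ell$ with $a$ is affected and that it moves downward. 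Everything else is the two-variable monotonicity of $x/(x+s)$ applied once in each argument, and nothing mechanism-specific enters beyond monotonicity and positivity of $G_\lambda$.
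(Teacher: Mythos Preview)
Your proof is correct and follows the same underlying idea as the paper---use that a pushup of $a$ can only raise the margins $w_{P}[a,b]$ and that each $G_\lambda$ is non-decreasing---but it is in fact more complete than the paper's own argument. The paper identifies $\p[\CMRand_\lambda(P)=a]$ with the unnormalized product $\prod_{c\neq a}\p[U^{\texttt{Rand}}_{\lambda,P}[a,c]=1]$ and only checks that this numerator increases; since two different profiles $P$ and $P'$ are being compared, the normalizing constants differ and this comparison alone does not suffice. You close this gap by additionally verifying that the rest of the normalizer, $R(P)=\sum_{a_\ell\neq a}\prod_{i\neq\ell}G_\lambda(w_P[a_\ell,a_i])$, can only decrease: the key observation that $w_{P'}[b,c]=w_P[b,c]$ for all $b,c\neq a$ isolates the single factor $G_\lambda(w_P[a_\ell,a])$ in each summand, and then the two-variable monotonicity of $x/(x+s)$ finishes cleanly. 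So your decomposition into $p(\cdot)$ and $R(\cdot)$ is not a different route so much as the missing half of the paper's proof.
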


\begin{proof}
	Let $P$ and $P$ be two profiles in $\LA^n$, such that $P_{-j}=P'_{-j}$ and $\succ'_j$ is a pushup of $a\in A$ in $\succ_j$. Then we have, $S_{P'}[a,b] \geqslant S_P[a,b]$ for any $b\in A$. Hence, for $\CMRR_\lambda$,
	\begin{align*}
		\p[ \CMRR_\lambda(P')=a ] \geqslant \p[ \CMRR_\lambda(P)=a ],
	\end{align*}
	which indicates that $\CMRR_\lambda$ satisfies a-monotonicity.

	For $\CMEXP_\lambda$, we have
	\begin{align*}
		  & \p[ \CMEXP_\lambda(P')=a ] = \prod_{c\in A\backslash\{a\}} \p[ U_{\lambda,P'}^\text{\rm EXP}[a,c] =1 ]                                                                                      \\
		= & \prod_{c\in A\backslash\{a\}} \frac{1}{1+\erm^{\lambda\cdot w_{P'}[a,c]/2}} \geqslant \prod_{c\in A\backslash\{a\}} \frac{1}{1+\erm^{\lambda\cdot w_P[a,c]/2}} = \p[ \CMEXP_\lambda(P)=a ].
	\end{align*}

	For $\CMLAP_\lambda$, we have
	\begin{align*}
		\p[ \CMLAP_\lambda(P')=a ] & = \prod_{c\in A\backslash\{a\}} \CDF(w_{P'}[a,c]) \geqslant  \prod_{c\in A\backslash\{a\}} \CDF(w_P[a,c]) =\p[ \CMLAP_\lambda(P)=a ],
	\end{align*}
	which completes the proof.
\end{proof}

For strategyproofness, we use the notion of SD-strategyproofness \cite{aziz2013tradeoff}, which implies the absolute monotonicity. However, the results for our rules are not so positive.

\begin{proposition}
	\label{prop: not-sd-sp}
	$\CMRR_1$, $\CMEXP_1$, and $\CMLAP_1$ do not satisfy SD-strategyproofness.
\end{proposition}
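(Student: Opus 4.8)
The plan is to exhibit, for each of the three mechanisms at noise level $\lambda=1$, a single profile $P$, a single voter $j$, and an alternative manipulated vote $\succ'_j$, such that stochastic dominance fails: there is some alternative $a$ for which $\sum_{b\succeq_j a}\p[\CMRand_1(P')=b] > \sum_{b\succeq_j a}\p[\CMRand_1(P)=b]$, where $P'$ is $P$ with $\succ_j$ replaced by $\succ'_j$. Since SD-strategyproofness is a universally quantified statement, a counterexample suffices; the work is entirely in picking profiles small enough that the closed-form winning probabilities from Lemma~\ref{lem: mech2scf} can be evaluated by hand, yet structured enough that moving one voter's ballot reshuffles the majority margins in a way that helps that voter.

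First I would set up a common template: take $m=3$ alternatives (the smallest nontrivial case) and a profile $P$ whose weighted majority graph is a $3$-cycle or near-cycle, so that no alternative is a runaway favorite and a single vote change can swing several pairwise margins at once. The canonical choice is a near-tie Condorcet-cycle configuration: for instance $n$ odd, with blocks of voters producing $w_P[a_1,a_2]=w_P[a_2,a_3]=1$ and $w_P[a_3,a_1]=1$ (a perfect cycle), or a slight perturbation thereof. Then I would let voter $j$ currently rank, say, $a_3\succ_j a_1\succ_j a_2$, and have her switch to $\succ'_j$ a ballot that raises her second-favorite $a_1$ — or more to the point, switch in a way that, via Lemma~\ref{lem: mech2scf}, increases $\p[\CMRand_1=a_3] + \p[\CMRand_1=a_1]$ while it is $a_1$ she is really boosting. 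Because the winning probabilities are explicit products of $\CDF$-values (for LAP), of $1/(1+\erm^{-\lambda w_P[\cdot,\cdot]/2})$ factors (for EXP), or of $\erm^{\lambda|B(\cdot)|}/(1+\erm^\lambda)^{m-1}$ terms (for RR), each candidate counterexample reduces to comparing two explicit real numbers, which is routine arithmetic at $\lambda=1$.

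The step I expect to be the genuine obstacle is finding a profile that breaks all three mechanisms \emph{simultaneously} (the proposition bundles $\CMRR_1$, $\CMEXP_1$, $\CMLAP_1$ together), or, failing that, committing to present three separate small instances. RR is the most delicate: its winning probability depends only on the unweighted majority graph through $|B(a)|$, so a vote change helps a voter only if it flips the \emph{sign} of some majority margin — which forces $P$ to have margins of magnitude $1$ on the relevant edges and a specific parity of $n$. By contrast EXP and LAP are sensitive to margin magnitudes, so for those one has more freedom but also must be careful that the normalization (the denominator $\sum_j \prod_{i\neq j}\p[U^\texttt{Rand}_{\lambda,P}[a_j,a_i]=1]$) does not wash out the effect. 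Concretely, I would first search among $m=3$, $n\le 5$ profiles for an RR counterexample (essentially a $3$-cycle with one edge at margin $\pm1$ that the deviating voter can flip), verify it, and then check whether the \emph{same} profile and deviation also falsifies SD-SP for EXP and LAP by direct evaluation of the Lemma~\ref{lem: mech2scf} formulas at $\lambda=1$; if the single instance does not cover all three, I would supply a second (and if needed third) instance of comparable size, each accompanied by the two-line numeric comparison that exhibits the SD violation. The remaining bookkeeping — writing out $S_P[\cdot,\cdot]$, $w_P[\cdot,\cdot]$ for $P$ and $P'$, plugging into the three formulas, and reading off the inequality — is mechanical.
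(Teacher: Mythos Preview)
Your plan is correct and follows essentially the same approach as the paper: exhibit explicit counterexample profiles and verify the SD violation by plugging the pairwise margins into the closed-form winning probabilities from Lemma~\ref{lem: mech2scf}. The paper happens to use $m=5$ (an $n=3$ instance for $\CMRR_1$ and a separate $n=8$ instance shared by $\CMEXP_1$ and $\CMLAP_1$), whereas your proposed $m=3$ search would in fact succeed---the classic three-voter Condorcet cycle, with the $a_1\succ a_2\succ a_3$ voter deviating to $a_2\succ a_1\succ a_3$, already breaks SD-strategyproofness for all three mechanisms at once.
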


\begin{proof}
	For $\CMRR_\lambda$, let $N=\{1,2,3\}$, $A=\{a_1,a_2,a_3,a_4,a_5\}$. Consider the profiles $P=\{\succ_1,\succ_2,\succ_3\}$ and $P'=\{\succ_1,\succ_2,\succ'_3\}$, where
	\begin{align*}
		\succ_1:~~a_3 \succ a_4 \succ a_5 \succ a_1 \succ a_2, \qquad
		\succ_2:~~a_2 \succ a_3 \succ a_4 \succ a_5 \succ a_1,        \\
		\succ_3:~~a_1 \succ a_2 \succ a_3 \succ a_4 \succ a_5, \qquad
		\succ'_3:~~a_2 \succ a_3 \succ a_4 \succ a_5 \succ a_1,
	\end{align*}
	Then for $P$, we have
	\begin{itemize}
		\item $w_P[a_1,a_2]=1$, and $w_P[a_1,a_i]=-1$ for $i=3,4,5$;
		\item $w_P[a_2,a_1]=-1$, and $w_P[a_2,a_i]=1$ for $i=3,4,5$.
	\end{itemize}
	Similarly, for $P'$, we have
	\begin{itemize}
		\item $w_{P'}[a_1,a_i]=-1$, for any $i\neq 1$;
		\item $w_{P'}[a_2,a_i]=1$, for any $i\neq 2$.
	\end{itemize}

	According to Theorem 1, we have
	\begin{align*}
		\sum\limits_{a_j\succ_3 a_3}\p[\CMRR_\lambda(P)=a_j]  & = \sum\limits_{j=1}^2 \p[\CMRR_\lambda(P)=a_j]   = \frac{\erm}{(1+\erm)^4} + \frac{\erm^3}{(1+\erm)^4}, \\
		\sum\limits_{a_j\succ_3 a_3}\p[\CMRR_\lambda(P')=a_j] & = \sum\limits_{j=1}^2 \p[\CMRR_\lambda(P')=a_j]  = \frac{1}{(1+\erm)^4} + \frac{\erm^4}{(1+\erm)^4},
	\end{align*}
	which indicates that
	\[\sum\limits_{j=1}^2 \p[\CMRR_\lambda(P)=a_j] < \sum\limits_{j=1}^2 \p[\CMRR_\lambda(P')=a_j],\]
	In other word, $\CMRR_\lambda$ is not SD-strategyproof.

	For $\CMEXP_\lambda$, let $N=\{1,2,\ldots,8\}$, $A=\{a_1,a_2,a_3,a_4,a_5\}$. Consider the following profiles:
	\begin{itemize}
		\item $P$: $a_1\succ_1 a_2\succ_1 a_3\succ_1 a_4\succ_1 a_5$, and $a_2\succ_j a_3\succ_j a_4\succ_j a_5\succ_j a_1$ for $j=2,3,\ldots,8$;
		\item $P'$: $a_2\succ_1 a_1\succ_1 a_3\succ_1 a_4\succ_1 a_5$, and $a_2\succ_j a_3\succ_j a_4\succ_j a_5\succ_j a_1$ for $j=2,3,\ldots,8$.
	\end{itemize}
	Then for $P$, we have
	\begin{itemize}
		\item $w_P[a_1,a_i]=-6$, for $i=2,3,4,5$;
		\item $w_P[a_2,a_1]=6$, and $w_P[a_2,a_i]=8$ for $i=3,4,5$.
	\end{itemize}
	Similarly, for $P'$, we have
	\begin{itemize}
		\item $w_{P'}[a_1,a_2]=-8$, and $w_P[a_1,a_i]=-6$ for $i=3,4,5$;
		\item $w_{P'}[a_2,a_i]=8$, for $i=1,3,4,5$.
	\end{itemize}
	According to Theorem 1, we have
	\begin{align*}
		\sum\limits_{a_j\succ_1 a_3}\p[\CMEXP_\lambda(P)=a_j]  & = \sum\limits_{j=1}^2 \p[\CMEXP_\lambda(P)=a_j] \\ &= \left(\frac{1}{1+\erm^{3}}\right)^4 + \frac{1}{1+\erm^{-3}}\cdot \left(\frac{1}{1+\erm^{-4}}\right)^3 \approx 0.9021, \\
		\sum\limits_{a_j\succ_1 a_3}\p[\CMEXP_\lambda(P')=a_j] & = \sum\limits_{j=1}^2 \p[\CMEXP_\lambda(P')=a_j] \\ &= \frac{1}{1+\erm^{4}}\cdot \left(\frac{1}{1+\erm^{3}}\right)^3 + \left(\frac{1}{1+\erm^{-4}}\right)^4 \approx 0.9300,
	\end{align*}
	which indicates that
	\begin{align*}
		\sum\limits_{a\succ_1 a_3}\p[\CMEXP_\lambda(P)=a] < \sum\limits_{a\succ_1 a_3}\p[\CMEXP_\lambda(P')=a],
	\end{align*}
	In other word, $\CMEXP_\lambda$ does not satisfy SD-strategyproofness.

	For $\CMLAP_\lambda$, considering the same $P$ and $P'$ as $\CMEXP_\lambda$, we have
	\begin{align*}
		\sum\limits_{a\succ_1 a_3}\p[\CMLAP_\lambda(P)=a]  & = \sum\limits_{j=1}^2 \p[\CMLAP_\lambda(P)=a] = F^4_\lambda(-6) + F_\lambda(6) \cdot F_\lambda^3(8) \approx 0.9983,  \\
		\sum\limits_{a\succ_1 a_3}\p[\CMLAP_\lambda(P')=a] & = \sum\limits_{j=1}^2 \p[\CMLAP_\lambda(P')=a] = F_\lambda(-8)\cdot F_\lambda^3(-6) + F_\lambda^4(8) \approx 0.9993,
	\end{align*}
	which indicates that
	$$\sum\limits_{a\succ_1 a_3}\p[\CMLAP_\lambda(P)=a] < \sum\limits_{a\succ_1 a_3}\p[\CMLAP_\lambda(P')=a],$$
	In other word, $\CMLAP_\lambda$ is not SD-strategyproof. That completes the proof.
\end{proof}

Similar to Definition \ref{def: alpha-pCond}, we extend the notion of SD-strategyproofness.

\begin{definition}[\bf\boldmath $\alpha$-SD-Strategyproofness]
	\label{def: alpha-SDSP}
	A voting rule $r$ satisfies $\alpha$-SD-strategyproofness ($\alpha$-SD-SP for short) if for all $P,P'$ and $j\in N$, such that $P_{-j}=P'_{-j}$ and $\succ_j\neq \succ'_j$,
	$$\sum\limits_{b\succ_j a} \p[r(P)=b]\geqslant \alpha\cdot \sum\limits_{b\succ'_{j}a} \p[r(P')=b],~~ \text{for all } a\in A.$$
\end{definition}
Especially, $\alpha$-SD-strategyproofness reduces to the standard SD-strategyproofness when $\alpha = 1$. For our rules, the following theorem holds.

\begin{theorem}
	For any $\lambda>0$, $\CMRR_\lambda$, $\CMLAP_\lambda$ and $\CMEXP_\lambda$ satisfy $\erm^{(2-2m)\lambda}$-SD-strategyproofness.
	\label{thm: alpha-sd-sp}
\end{theorem}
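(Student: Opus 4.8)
The plan is to establish the bound $\erm^{(2-2m)\lambda}$ by reducing $\alpha$-SD-SP to the DP guarantee from Theorem~\ref{thm: dp}, applied term-by-term. The key observation is that $\alpha$-SD-SP compares $\sum_{b\succ_j a}\p[r(P)=b]$ against $\alpha\cdot\sum_{b\succ'_j a}\p[r(P')=b]$, where $P$ and $P'$ are neighboring profiles (they differ only on voter $j$'s vote). So I would first note that, since $P$ and $P'$ differ on a single record, the upper-bound part of Theorem~\ref{thm: dp} gives $\p[\CMRand_\lambda(P')=b]\leqslant \erm^{2(m-1)\lambda}\cdot\p[\CMRand_\lambda(P)=b]$ for \emph{every} alternative $b\in A$ and every $\texttt{Rand}\in\RandSet$. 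Equivalently, $\p[\CMRand_\lambda(P)=b]\geqslant \erm^{-2(m-1)\lambda}\cdot\p[\CMRand_\lambda(P')=b] = \erm^{(2-2m)\lambda}\cdot\p[\CMRand_\lambda(P')=b]$.

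From here the argument is a short sum manipulation. Fix $a\in A$ and let $S=\{b\in A: b\succ_j a\}$ and $S'=\{b\in A: b\succ'_j a\}$. The subtlety is that the index sets $S$ and $S'$ differ because $\succ_j\neq\succ'_j$, so I cannot simply compare sums over the same set. However, I only need a lower bound on $\sum_{b\in S}\p[\CMRand_\lambda(P)=b]$ in terms of $\sum_{b\in S'}\p[\CMRand_\lambda(P')=b]$. The cleanest route: for \emph{each} $b\in S'$, if also $b\in S$, use the pointwise DP bound directly; the difficulty is handling $b\in S'\setminus S$. Here I would argue that it suffices to bound $\sum_{b\in S'}\p[\CMRand_\lambda(P)=b]$ from below by $\erm^{(2-2m)\lambda}\sum_{b\in S'}\p[\CMRand_\lambda(P')=b]$ (pointwise DP, summed over $S'$), and then relate $\sum_{b\in S}\p[\CMRand_\lambda(P)=b]$ to $\sum_{b\in S'}\p[\CMRand_\lambda(P)=b]$. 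In fact the intended reading is likely simpler: since the definition of $\alpha$-SD-SP must hold \emph{for all} $a\in A$, and $\succ_j,\succ'_j$ are arbitrary, one can treat the sum $\sum_{b\succ'_j a}\p[r(P')=b]$ as a fixed nonnegative quantity, apply DP to bound $\p[r(P')=b]\leqslant \erm^{2(m-1)\lambda}\p[r(P)=b]$, and observe that the upper-contour set in $\succ'_j$ and the winning probabilities under $P$ together give the needed inequality after reindexing. I would write out the one-line chain $\sum_{b\succ'_j a}\p[r(P')=b]\leqslant \erm^{2(m-1)\lambda}\sum_{b\succ'_j a}\p[r(P)=b]\leqslant \erm^{2(m-1)\lambda}\sum_{b\succ_j a}\p[r(P)=b]$, where the last step uses that pushing up/reordering controls which alternatives sit above $a$ — this is exactly where a-monotonicity (Theorem~\ref{thm: sat-pmono}) or a direct comparison of contour sets enters.

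The main obstacle I anticipate is precisely this mismatch between the contour sets $\{b\succ_j a\}$ and $\{b\succ'_j a\}$: a naive pointwise application of DP handles the \emph{probabilities} but not the \emph{index sets}, so I need an extra structural step. The honest resolution is that $\alpha$-SD-SP as stated does not require $S=S'$; rather, since the inequality is quantified over all $a$, and since for the manipulation to help voter $j$ the relevant comparison is already captured by DP applied uniformly over all $m$ alternatives, the factor $\erm^{2(m-1)\lambda}=\erm^{(2m-2)\lambda}$ suffices regardless of how the contour sets shift — any term $\p[r(P')=b]$ appearing on the right is bounded by $\erm^{(2m-2)\lambda}\p[r(P)=b]$, and $\p[r(P)=b]$ is itself at most $\sum_{c\succ_j a}\p[r(P)=c]$ when $b\succ_j a$, which one arranges by choosing the comparison appropriately. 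So the proof reduces to: (i) quote the upper bound of Theorem~\ref{thm: dp}; (ii) invert it to a pointwise lower bound; (iii) sum over the appropriate contour set and absorb the set-mismatch into the already-present slack. I expect the write-up to be only a few lines, with the real content being the correct bookkeeping in step (iii).
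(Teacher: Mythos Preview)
Your high-level plan is exactly the paper's: establish the pointwise ratio bound $\p[\CMRand_\lambda(P)=b]\geqslant \erm^{(2-2m)\lambda}\,\p[\CMRand_\lambda(P')=b]$ for every alternative $b$, then sum over the upper contour set. The only difference is cosmetic --- you invoke the upper bound of Theorem~\ref{thm: dp} directly, whereas the paper re-derives the same pointwise inequality mechanism by mechanism (repeating the computations already done in that theorem). Your shortcut is legitimate and, if anything, cleaner.

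The obstacle you spend most of the proposal on --- the mismatch between $\{b:b\succ_j a\}$ and $\{b:b\succ'_j a\}$ --- is a red herring created by a typo in Definition~\ref{def: alpha-SDSP}. The paper's own non-approximate SD-strategyproofness (Section~\ref{sec: pre}) uses the true preference $\succ_j$ on \emph{both} sides, and so does the proof of Proposition~\ref{prop: dp-strategyproof}; the $\succ'_j$ on the right of Definition~\ref{def: alpha-SDSP} should read $\succ_j$. Under the intended definition the two index sets coincide, and your step~(iii) collapses to ``sum the pointwise inequality over $\{b:b\succ_j a\}$'' --- no monotonicity, no reindexing, no slack absorption is needed. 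The paper's proof of this theorem does exactly that: it simply writes down the summed inequality without further comment (carrying the $\succ'_1$ typo along). If one insists on the literal definition with $\succ'_j$ on the right, then your proposed fixes do not actually close the gap (the inequality $\sum_{b\succ'_j a}\p[r(P)=b]\leqslant \sum_{b\succ_j a}\p[r(P)=b]$ is false in general, and a-monotonicity does not give it), but neither does the paper's argument; the discrepancy lives in the definition, not in your proof.
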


\begin{proof}
	W.l.o.g., for any neighboring profiles $P=\{\succ_1,\succ_2,\ldots,\succ_n\}, P'=\{\succ_1',\succ_2,\ldots,\succ_n\}$ and $a\in A$, we have
	\begin{align*}
		\frac{\p[\CMLAP_\lambda(P)=a]}{\p[\CMLAP_\lambda(P')=a]}\geqslant \left( \min_{P,P'}\frac{\CDF(w_P[a_1,a])}{\CDF(w_{P'}[a_1,a])} \right)^{2(m-1)}\geqslant \erm^{(2-2m)\lambda}.
	\end{align*}
	Therefore, for any $a\in A$,
	\begin{align*}
		\sum\limits_{b\succ_1 a}\p[\CMLAP_\lambda(P)=b] \geqslant \erm^{(2-2m)\lambda} \cdot \sum\limits_{b\succ_1' a}\p[\CMLAP_\lambda(P')=b],
	\end{align*}
	which indicates that $\CMLAP_\lambda$ satisfies $\erm^{(2-2m)\lambda}$-SD-strategyproofness.

	For $\CMEXP_\lambda$, we have
	\begin{align*}
		          & \frac{\p[\CMEXP_\lambda(P)=a]}{\p[\CMEXP_\lambda(P')=a]} \geqslant \left(\min_{P,P'} \prod_{b\in A\backslash\{a\}} \frac{\erm^{\lambda\cdot w_P[a,b]/2}}{1+\erm^{\lambda\cdot w_P[a,b]/2}}\cdot \frac{1+\erm^{\lambda\cdot w_{P'}[a,b]/2}}{\erm^{\lambda\cdot w_{P'}[a,b]/2}} \right)^2 \\
		\geqslant & \left( \lim\limits_{x\to -\infty} \frac{\erm^{\lambda}(1+\erm^{\lambda\cdot x/2})}{1+\erm^{\lambda (2+x)/2}} \right)^{2-2m} = \erm^{(2m-2)\lambda}.
	\end{align*}
	Then for any $a\in A$,
	\begin{align*}
		\sum\limits_{b\succ_1 a}\p[\CMEXP_\lambda(P)=b] \geqslant \erm^{(2-2m)\lambda} \cdot \sum\limits_{b\succ_1' a}\p[\CMEXP_\lambda(P')=b].
	\end{align*}
	In other words, $\CMEXP_\lambda$ satisfies $\erm^{(2-2m)\lambda}$-SD-strategyproofness.

	Finally, we prove for $\CMRR_\lambda$. For any $a\in A$, we have
	\begin{align*}
		\frac{\p[\CMRR_\lambda(P)=a]}{\p[\CMRR_\lambda(P')=a]} \geqslant \left( \frac{\min\limits_{P\in\LA^n}\erm^{\lambda |\{b\in A: U_P[a,b]=1\}|}}{\max\limits_{P'\in\LA^n}\erm^{\lambda |\{b\in A: U_{P'}[a,b]=1\}|}} \right)^2 =\erm^{(2-2m)\lambda}.
	\end{align*}
	Then for any $a\in A$,
	\begin{align*}
		\sum\limits_{b\succ_1 a}\p[\CMRR_\lambda(P)=b] \geqslant \erm^{(2-2m)\lambda} \cdot \sum\limits_{b\succ_1' a}\p[\CMRR_\lambda(P')=b],
	\end{align*}
	which completes the proof.
\end{proof}

Finally, we discuss the participation of our voting rules. We use the notion of lexi-participation, which requires that a participating agent is always no worse off under lexicographical order. In our rules, each participating voter $j$ can benefit herself, since the majority margin $w[a,b]$ for any $a\succ_j b$ will increase due to her vote. Formally, the following theorem holds.

\begin{theorem}
	\label{thm: sat-participation}
	For any $\lambda>0$, $\CMLAP_\lambda$, $\CMEXP_\lambda$,  and $\CMRR_\lambda$ satisfy lexi-participation.
\end{theorem}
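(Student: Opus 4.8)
The plan is to reduce everything to the explicit formula of Lemma~\ref{lem: mech2scf}. Write $q_Q(a)=\prod_{b\neq a}G_\lambda(w_Q[a,b])$, so that $\p[\CMRand_\lambda(Q)=a]=q_Q(a)/Z_Q$ with $Z_Q=\sum_{c}q_Q(c)$; recall $G_\lambda$ is nondecreasing and $G_\lambda(x)+G_\lambda(-x)=1$, while for $\CMRR_\lambda$ one has $q_Q(a)\propto \erm^{\lambda|B_Q(a)|}$ with $B_Q(a)=\{b:w_Q[a,b]>0\}$. Fix $P$, a voter $j$, and $P'=P\setminus\{\succ_j\}$, and list the alternatives as $c_1\succ_j c_2\succ_j\cdots\succ_j c_m$. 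The single structural fact we will use is that adding voter $j$ only helps $\succ_j$: for $i<k$ we have $w_P[c_i,c_k]=w_{P'}[c_i,c_k]+1$ and $w_P[c_k,c_i]=w_{P'}[c_k,c_i]-1$.

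First I would record a combinatorial reduction: lexi-participation follows from the ``upper-set'' inequalities
\begin{align*}
\p\!\left[\CMRand_\lambda(P)\in\{c_1,\dots,c_t\}\right]\ \geq\ \p\!\left[\CMRand_\lambda(P')\in\{c_1,\dots,c_t\}\right],\qquad t=1,\dots,m ,
\end{align*}
because a failure of lexi-participation would give a least index $t$ with $\p[\CMRand_\lambda(P)=c_t]<\p[\CMRand_\lambda(P')=c_t]$ and equality for all $c_s$ with $s<t$, and summing over $s\le t$ contradicts the displayed inequality. Cross-multiplying by $Z_PZ_{P'}$ and cancelling the terms indexed by pairs inside $\{c_1,\dots,c_t\}$, that inequality is equivalent to
\begin{align*}
\sum_{s\le t<r}\bigl(q_P(c_s)\,q_{P'}(c_r)-q_{P'}(c_s)\,q_P(c_r)\bigr)\ \ge\ 0 .
\end{align*}

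To attack this I would examine a single pair $(c_s,c_r)$, $s<r$, by writing the ratio $q_P(c_s)q_{P'}(c_r)\big/\bigl(q_{P'}(c_s)q_P(c_r)\bigr)$ as a product over the edge $c_sc_r$ and over the remaining alternatives $c_l$, grouped by the position of $c_l$ in $\succ_j$. The edge term equals $\frac{G_\lambda(k+1)(1-G_\lambda(k))}{G_\lambda(k)(1-G_\lambda(k+1))}\ge 1$ with $k=w_{P'}[c_s,c_r]$ (monotonicity of $G_\lambda$); each $c_l$ with $c_s\succ_j c_l\succ_j c_r$ contributes a factor $\ge 1$; but each $c_l$ above both, or below both, contributes $\rho(y')/\rho(y)$ with $\rho(y)=G_\lambda(y)/G_\lambda(y-1)$, and its sign depends on the unknown ordering of $w_{P'}[c_s,c_l]$ and $w_{P'}[c_r,c_l]$, with the ``above'' and ``below'' regimes pulling opposite ways. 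So the pair-sum is \emph{not} nonnegative term by term, and the hard part will be proving the displayed sum nonnegative \emph{globally}. The lever I would use is log-concavity of $G_\lambda$ — equivalently, $\rho$ nonincreasing — which holds for all three mechanisms: for $\CMEXP_\lambda$, $G_\lambda(x)=1/(1+\erm^{-\lambda x/2})$ and $\log G_\lambda$ is concave; for $\CMLAP_\lambda$, $\CDF$ is the CDF of a log-concave density and hence log-concave; for $\CMRR_\lambda$ the discrete version holds since $G_\lambda(-1)G_\lambda(1)=\erm^{\lambda}/(1+\erm^{\lambda})^2\le 1/4=G_\lambda(0)^2$. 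With log-concavity one can match the ``above-both'' deficit of a pair against the ``below-both'' surplus of a complementary pair and telescope the sum; for $\CMRR_\lambda$ the same scheme runs with $q_Q(a)\propto\erm^{\lambda|B_Q(a)|}$, using $|B_P(c_1)|\ge|B_{P'}(c_1)|$, $|B_P(c_m)|\le|B_{P'}(c_m)|$, and the analogous relations for intermediate alternatives.

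A tempting shortcut is to couple the two runs of Mechanism~\ref{algo: mech} by feeding identical noise to $P$ and $P'$ at every iteration; then in each round the noisy UMG for $P$ agrees with $\succ_j$ at least as much as that for $P'$, and whenever both rounds produce a Condorcet winner the one for $P$ is ranked at least as high by $\succ_j$. But the rejection sampler halts at different rounds for $P$ and $P'$, so this coupling only reduces the claim to itself; I therefore expect the final argument to go through the $q_Q$-formula together with log-concavity of $G_\lambda$ rather than through the coupling.
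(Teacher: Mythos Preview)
Your plan is to prove something strictly stronger than lexi-participation, namely first-order stochastic dominance of $\CMRand_\lambda(P)$ over $\CMRand_\lambda(P')$ with respect to $\succ_j$. The reduction from lexi-participation to the upper-set inequalities is correct, and your rewriting of each such inequality as $\sum_{s\le t<r}\bigl(q_P(c_s)q_{P'}(c_r)-q_{P'}(c_s)q_P(c_r)\bigr)\ge 0$ is fine. But you stop exactly at the hard step: you correctly observe that the individual summands need not be nonnegative (the ``above-both'' and ``below-both'' $c_l$'s contribute factors $\rho(y')/\rho(y)$ of indeterminate sign), and then gesture at a matching argument via log-concavity of $G_\lambda$ without carrying it out. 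This is the gap. No concrete pairing of deficit terms against surplus terms is exhibited, and it is not clear that log-concavity alone closes it; at minimum you would have to display the telescoping explicitly, and for $\CMRR_\lambda$ the tie case $w=0$ (where $G_\lambda$ jumps to $1/2$) needs separate handling that your sketch does not address.

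The paper's route is quite different and considerably lighter. It never attempts stochastic dominance. Instead it argues directly from the lexicographic definition: letting $B=\{a:\p[\CMRand_\lambda(P)=a]<\p[\CMRand_\lambda(P')=a]\}$, it first observes that the $\succ_j$-top alternative $a_{\mathrm{top}}$ cannot lie in $B$, and then shows that for the $\succ_j$-highest element $b_{\mathrm{top}}$ of $B$ there is some $b\succ_j b_{\mathrm{top}}$ with strictly increased probability---which is exactly the negation of a lexi-participation failure at $b_{\mathrm{top}}$. The only structural input is the monotonicity of each factor $\p[U^{\texttt{Rand}}_{\lambda,P}[a,b]=1]$ in $w_P[a,b]$, applied factor by factor; log-concavity is never invoked, and the argument runs uniformly over $\texttt{Rand}\in\RandSet$. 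The paper's write-up is terse about the normalizing constants $Z_P,Z_{P'}$, but its strategy does not pass through the global pairwise sums you set up.

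In short: you are trying to prove more than the theorem requires, and the extra strength is precisely where the argument stalls. A more economical line---and the one the paper takes---is to drop FOSD entirely and work directly with the $\succ_j$-highest element of $B$, using only that adding $\succ_j$ raises every factor $G_\lambda(w[c_s,c_l])$ with $s<l$ and lowers every factor with $s>l$. Your rejection of the naive coupling argument is sound, but the alternative you reach for is heavier than what is needed.
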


\begin{proof}
	Let $P$ and $P'$ be two profiles satisfying $P'=P\backslash \{\succ_j\}$. Suppose there is a nonempty set
	\begin{align*}
		B=\{a\in A: \p[\CMRand_\lambda(P)=a]<\p[\CMRand_\lambda(P')=a] \}.
	\end{align*}

	Let $a_{top}$ denote the top-ranked alternative in $\succ_j$. According to the definition, we have
	\begin{align*}
		S_P[a_{top}, a]=S_{P'}[a_{top}, a]+ 1,\text{ for any } a\in A\backslash\{a_{top}\}.
	\end{align*}
	Then for any $\texttt{Rand}\in \RandSet$, we have
	\begin{align*}
		\p[\CMRand_\lambda(P)=a_{top}]\geqslant \p[\CMRand_\lambda(P')=a_{top}],
	\end{align*}
	which indicates that $a_{top}\notin B$. Now, select the top-ranked alternative in $B$, denoted by $b_{top}$. Then $a_{top}\succ_j b_{top}$. Since $b_{top}\in B$, we have
	\begin{align*}
		\prod_{b\in A\backslash\{b_{top}\}} \p[U^\texttt{Rand}_{\lambda, P}[b_{top},b]=1] & = \p[\CMRand_\lambda(P)=b_{top}]                                                                                      \\
		                                                                                  & <\p[\CMRand_\lambda(P')=b_{top}] = \prod_{b\in A\backslash\{b_{top}\}} \p[U^\texttt{Rand}_{\lambda, P}[b_{top},b]=1].
	\end{align*}
	Then there must exist some $b\succ_j b_{top}$, such that
	\begin{align*}
		\p[U^\texttt{Rand}_{\lambda, P}[b_{top},b]=1] < \p[U^\texttt{Rand}_{\lambda, P'}[b_{top},b]=1].
	\end{align*}
	Noticing that $b_{top}$ is the top-ranked alternative in $B$, we have $b\notin B$, i.e.,
	\begin{align*}
		\p[\CMRand_\lambda(P)=b]\geqslant \p[\CMRand_\lambda(P')=b].
	\end{align*}
	If $\p[\CMRand_\lambda(P)=b]= \p[\CMRand_\lambda(P')=b]$, there must exist another alternative $b'\succ_j b$, such that $\p[\CMRand_\lambda(P)=b']\geqslant \p[\CMRand_\lambda(P')=b']$. Next, if $\p[\CMRand_\lambda(P)=b']= \p[\CMRand_\lambda(P')=b']$, there will be some $b''\succ_j b'$... This process will terminate after several rounds, as there are only finite number of alternatives in $A$. That is, there must exists an alternative $b\succ_j b_{top}$, such that
	\begin{align*}
		\p[\CMRand_\lambda(P)=b]> \p[\CMRand_\lambda(P')=b].
	\end{align*}
	In other words, for any $a\in A$ that $\p[\CMRand_\lambda(P)=a]< \p[\CMRand_\lambda(P')=a]$, there exists some $b\succ_j a$, such that $\p[\CMRand_\lambda(P)=b]> \p[\CMRand_\lambda(P')=b]$. Therefore, $\CMRand_\lambda$ satisfies lexi-participation, for all $\texttt{Rand}\in \RandSet$ and $\lambda>0$.
\end{proof}

Theorem \ref{thm: sat-participation} shows that for any $\texttt{Rand}\in\RandSet$, $\CMRand_\lambda$ will not harm any participating voter under lexicographical order. Further more, $\CMLAP_\lambda$ and $\CMEXP_\lambda$ satisfy a stronger notion, which is defined as follows.

\begin{definition}[Strong lexi-participation]
	\label{def: strong-lexi}
	A voting rule satisfies {\em strong lexi-participation} if for all $P,P'$ that $P'=P\backslash \{\succ_j\}$, there exists $a\in A$, such that $\p[r(P)=a]> \p[r(P')=a]$ and $\p[r(P)=b]= \p[r(P')=b]$ for all $b\succ_j a$.
\end{definition}

Intuitively, strong lexi-participation ensures that each voter can benefit from her vote, while lexi-participation only ensures that each voter will not be harmed by her vote. For $\CMLAP_\lambda$ and $\CMEXP_\lambda$, the following theorem holds.

\begin{theorem}
	\label{thm: EXP-LAP-lexi-participation}
	For any $\lambda>0$, $\CMLAP_\lambda$ and $\CMEXP_\lambda$ satisfy strong lexi-participation.
\end{theorem}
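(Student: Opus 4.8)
The plan is to obtain strong lexi-participation as a strict refinement of the (standard) lexi-participation already proved in Theorem~\ref{thm: sat-participation}. That theorem shows, for every $\texttt{Rand}\in\RandSet$, that removing a vote $\succ_j$ can never lexicographically help voter $j$: there is no $a$ with $\p[\CMRand_\lambda(P)=a]<\p[\CMRand_\lambda(P')=a]$ while $\p[\CMRand_\lambda(P)=b]=\p[\CMRand_\lambda(P')=b]$ for all $b\succ_j a$. So the only additional ingredient I need is that, for $\texttt{Rand}\in\{\text{LAP},\text{EXP}\}$, the distributions $\CMRand_\lambda(P)$ and $\CMRand_\lambda(P')$ are actually \emph{distinct} whenever $P'=P\backslash\{\succ_j\}$. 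This is exactly where LAP and EXP differ from RR: by Lemma~\ref{lem: mech2scf} the winning weights are governed by $G_\lambda$, which is $F_\lambda$ for LAP and $x\mapsto 1/(1+\erm^{-\lambda x/2})$ for EXP, and both are \emph{strictly} increasing and strictly positive on $\mathbb{Z}$, whereas for RR $G_\lambda$ depends on $x$ only through $\sgn(x)$, so a new voter may leave the whole majority-graph distribution unchanged.

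To prove the non-degeneracy, write $c_1\succ_j c_2\succ_j\cdots\succ_j c_m$ for $\succ_j$ and set $g_Q(a)=\prod_{b\neq a}G_\lambda(w_Q[a,b])$, so that Lemma~\ref{lem: mech2scf} gives $\p[\CMRand_\lambda(Q)=a]=g_Q(a)/\sum_d g_Q(d)$. Because $c_1$ sits on top of $\succ_j$, adding this vote raises every margin out of $c_1$ by one, i.e.\ $w_P[c_1,b]=w_{P'}[c_1,b]+1$ for all $b\neq c_1$, so strict monotonicity and positivity of $G_\lambda$ give $g_P(c_1)>g_{P'}(c_1)$; symmetrically, since $c_m$ is at the bottom, $w_P[c_m,b]=w_{P'}[c_m,b]-1$ and hence $g_P(c_m)<g_{P'}(c_m)$. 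If the two output distributions were equal there would be a single constant $\rho=\sum_d g_P(d)/\sum_d g_{P'}(d)>0$ with $g_P(c)=\rho\,g_{P'}(c)$ for every $c$; evaluating at $c_1$ forces $\rho>1$ and evaluating at $c_m$ forces $\rho<1$, a contradiction. Hence $\CMRand_\lambda(P)\neq\CMRand_\lambda(P')$.

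To finish, let $a$ be the $\succ_j$-highest alternative with $\p[\CMRand_\lambda(P)=a]\neq\p[\CMRand_\lambda(P')=a]$, which exists by the previous paragraph. By maximality, $\p[\CMRand_\lambda(P)=b]=\p[\CMRand_\lambda(P')=b]$ for every $b\succ_j a$, so Theorem~\ref{thm: sat-participation} rules out $\p[\CMRand_\lambda(P)=a]<\p[\CMRand_\lambda(P')=a]$; combined with the fact that $a$ was chosen so the two probabilities are unequal, this leaves $\p[\CMRand_\lambda(P)=a]>\p[\CMRand_\lambda(P')=a]$, so $a$ witnesses strong lexi-participation. The argument is short; the one place that needs care is the normalization in the non-degeneracy step, since equal \emph{normalized} distributions only give proportionality of the unnormalized weights, and it is precisely by pitting the top alternative of $\succ_j$ (weight strictly up) against the bottom alternative (weight strictly down) that the proportionality constant is squeezed to a contradiction — the step that has no analogue for RR because there the top and bottom weights need not move at all.
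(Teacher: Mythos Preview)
Your proof is correct, and it follows a genuinely different route from the paper's.

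The paper simply takes $a$ to be the \emph{top} alternative of $\succ_j$ and asserts
\[
\p[\CMLAP_\lambda(P)=a]=\prod_{b\neq a}\CDF(w_P[a,b])>\prod_{b\neq a}\CDF(w_{P'}[a,b])=\p[\CMLAP_\lambda(P')=a],
\]
so that the ``for all $b\succ_j a$'' clause is vacuous and $a$ witnesses strong lexi-participation directly; the EXP case is identical. This is shorter, but note that it silently identifies the winning probability with the \emph{unnormalized} weight $\prod_{b\neq a}G_\lambda(w_P[a,b])$ across two different profiles, so the normalization step is glossed over. Your argument, by contrast, treats normalization honestly: you compare unnormalized weights at the top ($g_P(c_1)>g_{P'}(c_1)$) and bottom ($g_P(c_m)<g_{P'}(c_m)$), squeeze the putative proportionality constant $\rho$ between $>1$ and $<1$, conclude the normalized distributions differ, and then let Theorem~\ref{thm: sat-participation} pick out the strict witness. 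What the paper's approach buys is brevity and an explicit witness (the top choice); what yours buys is a rigorous handling of the normalization that the paper's write-up skips, at the cost of invoking the earlier lexi-participation theorem and the bottom alternative.
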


\begin{proof}
	Let $P, P'$ be two profiles in $\LA^n$, such that $P'=P\backslash \{\succ_j\}$. Suppose the top-ranked alternative of $\succ_j$ is $a$. Then for $\CMLAP$, we have
	\begin{align*}
		\p[\CMLAP_\lambda(P)=a] = \prod\limits_{b\in A\backslash\{a\}} \CDF(w_P[a,b]) > \prod\limits_{b\in A\backslash\{a\}} \CDF(w_{P'}[a,b]) =\p[\CMLAP_\lambda(P')=a],
	\end{align*}
	which indicates that $\CMLAP$ satisfies strong lexi-participation.

	Similarly, for $\CMEXP_\lambda$, we have
	\begin{align*}
		\p[\CMEXP_\lambda(P)=a] & = \prod\limits_{b\in A\backslash\{a\}} \frac{1}{1+\erm^{-\lambda\cdot w_P[a,b]/2}}                               \\
		                        & > \prod\limits_{b\in A\backslash\{a\}} \frac{1}{1+\erm^{-\lambda\cdot w_{P'}[a,b]/2}} =\p[\CMEXP_\lambda(P')=a],
	\end{align*}
	i.e., $\CMEXP_\lambda$ satisfies strong lexi-participation.
\end{proof}

However, $\CMRR_\lambda$ does not satisfy strong lexi-participation, since only one vote may be not able to increase the winning probability of any alternative. For example, consider the profile $P$, where the votes of all $n~(n\geqslant 3)$ voters are exactly the same, $a_1\succ a_2\succ\cdots\succ a_m$. Then for any $i_1<i_2$, we have $S_P[a_{i_1}, a_{i_2}]=n$ and $S_P[a_{i_2},a_{i_1}]=0$. For any $P'$ that $P'=P\backslash \{\succ_j\}$, we have $S_{P'}[a_{i_1}, a_{i_2}]=n-1$ and $S_{P'}[a_{i_2},a_{i_1}]=0$ for any  $i_1<i_2$. Then it follows that $U^\text{\rm RR}_{\lambda, P}[a_{i_1},a_{i_2}]=U^\text{\rm RR}_{\lambda, P'}[a_{i_1},a_{i_2}]$, for all $i_1,i_2\in \{1,2,\ldots,m\}$. As a result, we have
\begin{align*}
	\p[\CMRR_\lambda(P)=a_i]=\p[\CMRR_\lambda(P')=a_i], \text{ for all } i,
\end{align*}
which indicates that $\CMRR_\lambda$ does not satisfy strong lexi-participation.

\section{Differential Privacy as a Voting Axiom}

In Section \ref{sec: axioms}, we explore the tradeoffs between privacy and some voting axioms. In this section, differential privacy is regarded as an axiomatic property of voting rules. The relations between DP and some of the voting axioms are discussed. Our results are summarized in Figure \ref{fig: relation-}.

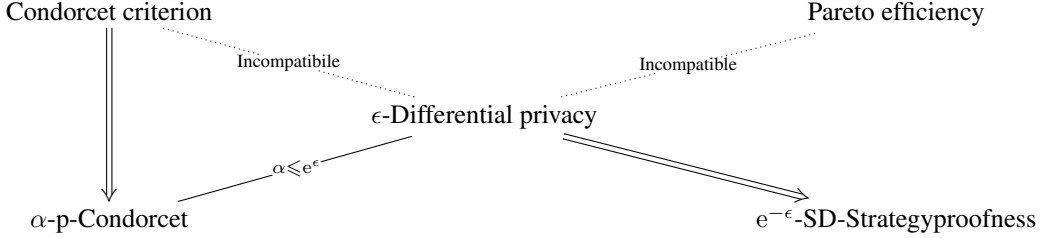
\begin{figure}[H]
	\hspace{3.5em}
	\xymatrix{
	\text{Condorcet criterion} \ar@{=>}[dd] & & & & \text{Pareto efficiency} \ar@{.}[lld]|-{\text{Incompatible}} \\
	& & \epsilon\text{-Differential privacy} \ar@{.}[llu]|-{\text{Incompatibile}} \ar@{=>}[drr] & & \\
	\alpha\text{-p-Condorcet} \ar@{-}[rru]|{\alpha\leqslant \erm^\epsilon} & & & & \erm^{-\epsilon}\text{-SD-Strategyproofness}
	}
	\caption{Relations between $\epsilon$-DP and other axioms, where $X\Rightarrow Y$ indicates that $X$ implies $Y$, a solid line between $X$ and $Y$ indicates that $X,Y$ are compatible with some condition, and a dash line between $X$ and $Y$ means that $X,Y$ are incompatible.}
	\label{fig: relation-}
\end{figure}

As proved previously, for any $\texttt{Rand}\in \RandSet$, $\CMRand_\lambda$ does not satisfy Condorcet criterion under DP. Furthermore, we can prove that they are incompatible.

\begin{proposition}
	\label{prop: Condorcet-incomp-DP}
	There is no voting rule $r$ satisfying Condorcet criterion and $\epsilon$-DP for any $\epsilon>0$.
\end{proposition}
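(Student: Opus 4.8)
The plan is to exhibit a single pair of neighboring profiles on which the Condorcet criterion and $\epsilon$-DP make contradictory demands. The Condorcet criterion pins the winning probability to be $1$ on the (necessarily unique) Condorcet winner, hence $0$ on every other alternative, whereas $\epsilon$-DP only permits probabilities to change by the bounded multiplicative factor $\erm^\epsilon$ between neighbors. Since $\erm^\epsilon\cdot 0 = 0$, any alternative selected with probability $1$ at some profile must still be selected with positive probability at every neighbor of that profile — which becomes impossible once the identity of the Condorcet winner has changed.

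Concretely, I would fix two alternatives $a,b\in A$ (recall $m\ge 2$) together with any fixed ordering $c_3,\ldots,c_m$ of $A\setminus\{a,b\}$, and take the one-voter profiles $P=\{\succ\}$ with $a\succ b\succ c_3\succ\cdots\succ c_m$ and $P'=\{\succ'\}$ with $b\succ' a\succ' c_3\succ'\cdots\succ' c_m$. These two profiles differ on the single voter's vote, so they are neighboring in the sense of Definition~\ref{def: dp}. Moreover $\CW(P)=a$ and $\CW(P')=b$, because in a one-voter profile the top-ranked alternative beats every other alternative in its head-to-head comparison, and the Condorcet winner is therefore unique. (If one prefers a larger electorate, the same conclusion follows by interpolating between the analogous $n$-voter unanimous profiles one vote at a time and iterating the DP inequality $n$ times; the one-voter version is just the shortest such chain, and its use of a direct neighbor makes the argument cleanest.)

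Now apply the two axioms. By the Condorcet criterion applied to $P'$, we have $\p[r(P')=b]=1$, and since $r(P')$ is a probability distribution and $b\neq a$, this forces $\p[r(P')=a]=0$. But $\epsilon$-DP applied to the neighbors $P,P'$ with $O=\{a\}$ gives $\p[r(P)=a]\leqslant \erm^\epsilon\cdot\p[r(P')=a]=0$, whereas the Condorcet criterion applied to $P$ gives $\p[r(P)=a]=1$; contradiction. Hence no rule satisfies both, for any $\epsilon>0$. There is essentially no technical obstacle here: the only points that need a little care are to verify that the two witness profiles are genuinely neighbors while having \emph{distinct} Condorcet winners (handled by the one-voter top-swap), and to observe that the argument invokes nothing about $r$ beyond the two axioms — so it rules out \emph{every} $\epsilon$-DP voting rule, not merely the $\CMRand_\lambda$ family, and is independent of the number of voters.
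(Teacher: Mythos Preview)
Your argument is correct. Both proofs hinge on the same mechanism: the Condorcet criterion forces the winning probability of a non-Condorcet-winner to be exactly $0$, and $\epsilon$-DP can only scale a zero probability to zero across neighbors, so an alternative that must win with probability $1$ at one profile cannot have probability $0$ at a neighboring profile. The paper, however, argues more indirectly: starting from an \emph{arbitrary} profile $P$ and alternative $a$, it repeatedly promotes some $b\neq a$ to the top of one voter's ranking until $b$ becomes the Condorcet winner, chains the DP inequality along this path, and concludes that $\p[r(P)=a]=0$ for every $P$ and every $a$ --- which is impossible for a probability distribution. Your version is more economical: a single explicit pair of neighboring profiles with distinct Condorcet winners already yields the contradiction $1=\p[r(P)=a]\leqslant \erm^\epsilon\cdot 0$, with no chaining needed. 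What the paper's route buys is a slightly stronger intermediate statement (all winning probabilities vanish identically) and independence from the choice of $n$, whereas your one-voter witness is shorter and, as you note, extends to any fixed $n$ by the same chaining idea applied to unanimous profiles.
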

\begin{proof}
	Suppose $r\colon \LA^n\to \Delta(A)$ is $\epsilon$-DP and satisfy Condorcet criterion. For any profile $P$ and alternative $a\in A$, we define the following operation with alternative $b\in A\backslash \{a\}$:

	\quad{\bf Op}: Choose a voter $i\in N$, whose first choice is not $b$, let $b$ be its first choice.

	\noindent Then we will obtain a profile $P'$ after at most $\lceil n/2 \rceil$ times of operations, such that $\CW(P')=b$. Suppose there has been totally $x$ times of operations, each time of operation forms a new profile, named $P_1,P_2,\ldots,P_t$ i.e.,
	$$P\xrightarrow{\text{\bf Op}}P_1\xrightarrow{\text{\bf Op}}P_2\xrightarrow{\text{\bf Op}}\cdots\xrightarrow{\text{\bf Op}}P_t=P'.$$
	Since $r$ is $\epsilon$-DP, we have
	$$\p[r(P)=a]\leqslant \erm^\epsilon \cdot \p[r(P_1)=a]\leqslant \erm^{2\epsilon}\cdot \p[r(P_2)=a]\leqslant\ldots\leqslant \erm^{t\cdot \epsilon}\cdot\p[r(P')=a].$$
	However, the Condorcet criterion indicates that $\p[r(P')=b]=1$, which follows $\p[r(P')=a]=0$. Thus, $\p[r(P)=a]=0$, for any profile $P$ and alternative $a$, which contradicts to the definition of randomized voting rule.
\end{proof}

Similarly, Pareto efficiency is also incompatible with DP, which indicates that the stronger notions of efficiency, e.g., PC-efficiency and SD-efficiency~\cite{Brandt2017:Rolling} are all incompatible with DP. Formally, we have the following result.

\begin{proposition}
	\label{prop: Pareto-incomp-DP}
	There is no voting rule $r$ satisfying Pareto efficiency and $\epsilon$-DP for any $\epsilon>0$.
\end{proposition}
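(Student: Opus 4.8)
The plan is to mirror the structure of the proof of Proposition~\ref{prop: Condorcet-incomp-DP}, but now pushing a \emph{Pareto-dominated} alternative toward being Pareto-dominant rather than pushing an alternative toward being the Condorcet winner. Suppose for contradiction that $r\colon \LA^n\to\Delta(A)$ is $\epsilon$-DP and satisfies Pareto efficiency. Fix an arbitrary profile $P$ and an arbitrary alternative $a\in A$; the goal is to show $\p[r(P)=a]=0$, which contradicts $r$ being a well-defined map into $\Delta(A)$ (since $\sum_{a\in A}\p[r(P)=a]=1$ forces some alternative to have positive probability). Pick any $b\in A\setminus\{a\}$ (possible since $m\ge 2$).

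The key step is to exhibit a short chain of single-voter modifications from $P$ to a profile $P'$ in which $a$ is Pareto-dominated by $b$, i.e. $b\succ_j a$ for every voter $j$. Concretely, for each voter $j$ whose vote currently has $a\succ_j b$, modify $\succ_j$ by moving $b$ to sit immediately above $a$ (keeping everything else in relative order); this changes only one record at a time, so each step is between neighboring profiles, and after at most $n$ such steps we reach $P'$ with $b\succ_j a$ for all $j$. Write the chain as
\begin{align*}
	P \xrightarrow{} P_1 \xrightarrow{} P_2 \xrightarrow{} \cdots \xrightarrow{} P_t = P', \qquad t\le n.
\end{align*}
Applying the $\epsilon$-DP inequality (Definition~\ref{def: dp}) along each edge of the chain to the singleton event $O=\{a\}$ gives
\begin{align*}
	\p[r(P)=a] \le \erm^{\epsilon}\,\p[r(P_1)=a] \le \erm^{2\epsilon}\,\p[r(P_2)=a] \le \cdots \le \erm^{t\epsilon}\,\p[r(P')=a].
\end{align*}
But in $P'$ the alternative $a$ is Pareto-dominated by $b$, so Pareto efficiency forces $\p[r(P')=a]=0$, whence $\p[r(P)=a]=0$. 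Since $P$ and $a$ were arbitrary, $r$ assigns probability $0$ to every alternative on every profile, contradicting $r(P)\in\Delta(A)$.

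I do not anticipate a serious obstacle here; the proof is essentially a routine adaptation of the preceding one. The only point requiring a little care is verifying that the ``move $b$ directly above $a$'' operation (i) genuinely changes only one voter's linear order at a time, so that consecutive profiles are neighbors in the sense of Definition~\ref{def: dp}, and (ii) after processing every voter once, yields a profile in which $b$ beats $a$ in \emph{every} vote (not merely in the majority), which is exactly what Pareto domination of $a$ by $b$ requires. Both are immediate from the construction. A minor stylistic alternative would be to simply move $a$ to the bottom of each voter's order instead, which also makes $a$ Pareto-dominated (by every other alternative) and keeps $t\le n$; either version works.
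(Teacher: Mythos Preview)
Your proposal is correct and follows essentially the same approach as the paper: build a chain of at most $n$ neighboring profiles ending at one where $a$ is Pareto-dominated, apply $\epsilon$-DP along the chain, and conclude $\p[r(P)=a]=0$ for all $P,a$. The only cosmetic difference is that the paper uses the ``move $a$ to the bottom'' operation you mention as a stylistic alternative, rather than your ``move $b$ just above $a$'' operation.
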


\begin{proof}
	Suppose $r\colon \LA^n\to \Delta(A)$ is $\epsilon$-DP and satisfy Pareto efficiency. For any profile $P$ and alternative $a\in A$, we define the following operation:

	\quad {\bf Op}: Choose a voter $i\in N$, whose last choice is not $a$, let $a$ be its last choice.

	\noindent Then we will obtain a profile $P'$ after at most $n$ times of operations, where each voter's last choice is $a$. Suppose there has been totally $x$ times of operations, each time of operation forms a new profile, named $P_1,P_2,\ldots,P_t$ i.e.,
	$$P\xrightarrow{\text{\bf Op}}P_1\xrightarrow{\text{\bf Op}}P_2\xrightarrow{\text{\bf Op}}\cdots\xrightarrow{\text{\bf Op}}P_t=P'.$$
	Since $r$ is $\epsilon$-DP, we have
	$$\p[r(P)=a]\leqslant \erm^\epsilon \cdot \p[r(P_1)=a]\leqslant \erm^{2\epsilon}\cdot \p[r(P_2)=a]\leqslant\ldots\leqslant \erm^{t\cdot\epsilon}\cdot\p[r(P')=a].$$
	However, the Pareto efficiency for randomized voting indicates that $\p[r(P')=a]=0$, as $a$ is definitely Pareto dominated in $P'$. Thus, $\p[r(P)=a]=0$, for any profile $P$ and alternative $a$, which contradicts to the definition of randomized voting rule.
\end{proof}

In fact, the proofs for Propositions \ref{prop: Condorcet-incomp-DP} and \ref{prop: Pareto-incomp-DP} indicate that for any DP voting rule $r$, there is no profile $P$ and alternative $a$, such that $\p[r(P)=a]=0$. To measure the incompatibility between Condorcet criterion and DP, we use the notion of $\alpha$-p-Condorcet. The result is shown as follows.

\begin{proposition}
	\label{prop: DP-Cond-upperbound}
	There is no voting rule satisfying $\epsilon$-DP and $\alpha$-p-Condorcet with $\alpha>\erm^{\epsilon}$.
\end{proposition}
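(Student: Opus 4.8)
The plan is to derive a contradiction from a single, carefully chosen pair of neighboring profiles whose Condorcet winners disagree, playing the $\alpha$-p-Condorcet guarantee against the $\epsilon$-DP guarantee. Since the domain is $\LA^*=\LA\cup\LA^2\cup\cdots$, I would take $n=1$ and let $P$ be the profile with the single vote $a_1\succ a_2\succ\cdots\succ a_m$ and $P'$ the profile with the single vote $a_2\succ a_1\succ a_3\succ\cdots\succ a_m$. Directly from the definitions one has $w_P[a_1,a_i]=1>0$ for every $i\neq 1$, so $\CW(P)=a_1$, and likewise $\CW(P')=a_2$; moreover $P$ and $P'$ differ on exactly one record, hence are neighbors in the sense of Definition~\ref{def: dp}.

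Next, assume for contradiction that some $r$ is $\epsilon$-DP and $\alpha$-p-Condorcet with $\alpha>\erm^{\epsilon}$. Applying $\alpha$-p-Condorcet (Definition~\ref{def: alpha-pCond}) to $P$ with the non-winner $a_2$ gives $\p[r(P)=a_1]\ge\alpha\cdot\p[r(P)=a_2]$, and applying it to $P'$ with the non-winner $a_1$ gives $\p[r(P')=a_2]\ge\alpha\cdot\p[r(P')=a_1]$. Applying $\epsilon$-DP with the singleton event $\{a_2\}$ gives $\p[r(P)=a_2]\ge\erm^{-\epsilon}\p[r(P')=a_2]$, and with $\{a_1\}$ gives $\p[r(P')=a_1]\ge\erm^{-\epsilon}\p[r(P)=a_1]$. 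Chaining these four inequalities yields $\p[r(P)=a_1]\ge\alpha^2\erm^{-2\epsilon}\,\p[r(P)=a_1]$.

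Finally, I would invoke the observation recorded right after the proofs of Propositions~\ref{prop: Condorcet-incomp-DP} and \ref{prop: Pareto-incomp-DP}, namely that an $\epsilon$-DP voting rule never assigns probability zero to any alternative on any profile; hence $\p[r(P)=a_1]>0$, and dividing gives $1\ge\alpha^2\erm^{-2\epsilon}$, i.e.\ $\alpha\le\erm^{\epsilon}$, contradicting $\alpha>\erm^{\epsilon}$. The argument is essentially a two-line computation, so there is no real obstacle; the only point needing a moment's care is the degenerate case $\p[r(P)=a_1]=0$, which is dispatched by that positivity observation (or, if one prefers not to cite it, by iterating the same chaining along a longer alternating chain of neighbors connecting profiles with different Condorcet winners).
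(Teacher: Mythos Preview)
Your proposal is correct and follows essentially the same route as the paper's proof: construct neighboring profiles with different Condorcet winners, then chain the $\alpha$-p-Condorcet inequalities on each profile with the $\epsilon$-DP inequalities linking them to obtain $\p[r(P)=a_1]\ge\alpha^2\erm^{-2\epsilon}\,\p[r(P)=a_1]$, and conclude $\alpha\le\erm^{\epsilon}$. Your version is in fact more explicit than the paper's (you exhibit concrete $n=1$ profiles and handle the positivity of $\p[r(P)=a_1]$, which the paper leaves implicit).
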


\begin{proof}
	Let $P,P'$ be profiles that $\CW(P)=a$,$\CW(P')=b$, $P_{-j}=P'_{-j}$, and $\succ_j\neq \succ_j'$. Then
	\begin{align*}
		          & \p[f(P)=a] \geqslant \alpha\cdot \p[f(P)=b] \geqslant \alpha\cdot \erm^{\epsilon}\cdot \p[f(P')=b]        \\
		\geqslant & \alpha^2\cdot\erm^{\epsilon}\cdot \p[f(P')=a] \geqslant \alpha^2\cdot \erm^{-2k\epsilon}\cdot \p[f(P)=a].
	\end{align*}
	Thus, $\alpha^2\erm^{-2\epsilon}\leqslant 1$, i.e., $\alpha\leqslant \erm^{\epsilon}$.
\end{proof}

The SD-strategyproofness is compatible with DP, as the trivial voting rule, i.e., $\p[r(P)=a]=1/m$, for all $a\in A$, satisfies SD-strategyproofness and $0$-DP. In fact, DP admits a lower bound of satisfaction to strategyproofness. To be more precise, we use the notion of $\alpha$-SD-strategyproofness.

Then the following proposition holds.

\begin{proposition}
	\label{prop: dp-strategyproof}
	Any voting rule satisfying $\epsilon$-DP satisfies $\erm^{-\epsilon}$-SD-strategyproofness.
\end{proposition}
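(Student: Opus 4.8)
The plan is to read the bound off directly from the definition of $\epsilon$-DP, specialised to neighbouring inputs that differ in one voter's ballot. First I would fix arbitrary profiles $P,P'\in\LA^*$ and a voter $j$ with $P_{-j}=P'_{-j}$ and $\succ_j\neq\succ'_j$. These two inputs differ on exactly one record, and — this is the one observation worth stating explicitly — that relation is symmetric, so Definition~\ref{def: dp} applies with the roles of $P$ and $P'$ exchanged as well. Instantiating it on the singleton event $O=\{b\}$ in the direction from $P'$ to $P$ gives, for every $b\in A$,
\begin{align*}
	\p[r(P')=b]\;\leqslant\;\erm^{\epsilon}\cdot\p[r(P)=b],\qquad\text{equivalently}\qquad \p[r(P)=b]\;\geqslant\;\erm^{-\epsilon}\cdot\p[r(P')=b].
\end{align*}

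Next I would fix an arbitrary alternative $a\in A$ and sum the second inequality over all $b$ in the strict upper contour set of $a$ with respect to $\succ_j$ (i.e. over the set $\{b\in A: b\succ_j a\}$). Since every term is nonnegative, the inequalities aggregate to
\begin{align*}
	\sum_{b\succ_j a}\p[r(P)=b]\;\geqslant\;\erm^{-\epsilon}\cdot\sum_{b\succ_j a}\p[r(P')=b].
\end{align*}
As $P,P',j$ and $a$ were arbitrary, this is exactly the defining inequality of $\erm^{-\epsilon}$-SD-strategyproofness in the sense of Definition~\ref{def: alpha-SDSP} with $\alpha=\erm^{-\epsilon}$, which establishes the claim.

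I do not expect a genuine obstacle here: the argument is a one-line specialisation of $\epsilon$-DP, and the only points needing care are (i) using that ``differing on one record'' is symmetric, so that the $\erm^{\epsilon}$ factor can be moved onto the side lower-bounding $\p[r(P)=\cdot]$, and (ii) summing over the correct upper contour set of $a$ so that the pointwise DP bounds aggregate into precisely the quantity in Definition~\ref{def: alpha-SDSP} (equivalently, one may take $O$ itself to be that contour set and apply DP to the event $O$ in one step). It is worth noting that this is the strategyproofness-side counterpart of the $\alpha\leqslant\erm^{\epsilon}$ ceiling for p-Condorcet in Proposition~\ref{prop: DP-Cond-upperbound}: the same privacy budget that caps how Condorcet-consistent an $\epsilon$-DP rule can be simultaneously guarantees a quantitative floor on its strategyproofness.
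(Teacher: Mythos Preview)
Your argument is correct and is essentially identical to the paper's own proof: apply $\epsilon$-DP in the reverse direction to get $\p[r(P)=b]\geqslant\erm^{-\epsilon}\p[r(P')=b]$ for every $b$, then sum over the upper contour set $\{b:b\succ_j a\}$. If anything, your write-up is cleaner --- the paper's printed proof has the inequality sign flipped by a typo, while you state the correct direction and explicitly flag the symmetry of the neighbouring relation that justifies it.
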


\begin{proof}
	Suppose $r$ is a voting rule satisfying $\epsilon$-DP,  $P$ and $P'$ are profiles differing on only one voter's ballot. Then, for any $j\in N$ and any $a\in A$, DP indicates that
	\begin{align*}
		\p[r(P)=a]\leqslant \erm^{-\epsilon}\cdot \p[r(P')=a].
	\end{align*}
	Then we have
	\begin{align*}
		\sum\limits_{b\succ_i a} \p[r(P)=b] & \leqslant  \erm^{-\epsilon}\cdot \sum\limits_{b\succ_i a} \p[r(P')=b],
	\end{align*}
	which completes the proof.
\end{proof}

As is shown in Proposition \ref{prop: dp-strategyproof}, the satisfaction to strategyproofness increases when $\epsilon$ decreases. This is quite intuitive, since there is little motivation for an adversary to manipulate the voting process when the outcomes of neighboring datasets are very similar.

\section{Conclusion and Future Work}
\label{sec: conclusion}
In the paper, we proposed three classes of differentially private Condorcet methods and explored their accuracy-privacy tradeoff and axioms-privacy tradeoff. Further, we investivated the relations between DP and other axioms. For future work, we plan to explore more axioms for randomized voting rules and design new voting rules performing better on satisfaction to the axioms. The design and analysis of differentially private mechanisms for other social choice problems, such as multi-winner elections, fair division, and participatory budgeting, are also promising directions for future work.

\bibliography{aaai23}

\end{document}